\documentclass{article}
\usepackage{amsmath, amssymb, amsthm}
\usepackage{mathtools}
\usepackage{graphicx}
\usepackage{booktabs}
\usepackage{enumitem}
\usepackage{float}
\usepackage{microtype}
\usepackage{listings}
\lstset{basicstyle=\ttfamily\small,breaklines=true,frame=single,columns=fullflexible}
\usepackage{natbib}
\usepackage[hidelinks]{hyperref}
\usepackage[nameinlink,noabbrev]{cleveref}
\usepackage[toc,page]{appendix}
\theoremstyle{plain}
\newtheorem{proposition}{Proposition}[section]
\newtheorem{lemma}[proposition]{Lemma}
\newtheorem{assumption}[proposition]{Assumption}
\newtheorem{problem}[proposition]{Problem}
\newtheorem{theorem}[proposition]{Theorem}
\theoremstyle{remark}
\newtheorem*{remark}{Remark}
\numberwithin{equation}{section}

\newcommand{\E}{\mathbb{E}}

\newcommand{\PP}{\mathbb{P}}
\newcommand{\dd}{\mathrm{d}}
\newcommand{\barF}{\bar F}
\newcommand{\haz}{h}
\newcommand{\R}{\mathbb{R}}
\newcommand{\Cost}{\mathcal{C}} 
\setlist[itemize]{leftmargin=*, itemsep=2pt, topsep=2pt}
\setlist[enumerate]{leftmargin=*, itemsep=2pt, topsep=2pt}


\title{Two-Instrument Screening under Soft Budget Constraints}
\author{Xinli Guo \thanks{Email: \texttt{xinlig@mun.ca}. Department of Economics, Memorial University of Newfoundland.}}
\date{August 2025}

\begin{document}

\maketitle

\begin{abstract}
We study soft budget constraints in multi-tier public finance when an upper-tier government uses two instruments: an ex-ante grant schedule and an ex-post rescue. Under convex rescue costs and standard primitives, the three-stage leader--follower problem collapses to one dimensional screening with a single allocation index: the cap on realized rescue. A hazard-based characterization delivers a unified rule that nests (i) no rescue, (ii) a threshold--cap with commitment, and (iii) a threshold--linear--cap without commitment. The knife-edge for eliminating bailouts compares the marginal cost at the origin to the supremum of a virtual weight, and the comparative statics show how greater curvature tightens caps while discretion shifts transfers toward front loading by lowering the effective grant weight. The framework provides a portable benchmark for mechanism design and yields testable implications for policy and empirical work on intergovernmental finance.
\end{abstract}

\bigskip
\noindent\textbf{JEL:} H71; H77; D82; C72; D86.\\
\noindent\textbf{Keywords:} soft budget constraints; mechanism design; municipal finance

\medskip
\newpage

\section{Introduction}\label{sec:intro}
Soft budget constraints (SBC) emerge when an upper tier acquires a track record of rescues and lower-tier entities rationally adjust behavior in anticipation of future relief. While the literature spans transition economies and federations, two gaps persist. First, \emph{timing}: transfers are often decided sequentially rather than in a simultaneous-move environment. Second, \emph{instrumentation}: practice mixes an ex-ante grant menu with ex-post rescues, yet most formal benchmarks rely on a single instrument under full commitment.

This paper models SBC as a \emph{two-instrument} Stackelberg screening problem. A leader commits at $t=0$ to a grant schedule and a cap on realized rescues; followers privately observe fiscal-need types, choose effort and fiscal variables, and face noisy gap signals at $t=2$. The realized payout is implementable and capped. 

\paragraph{Contributions.} We provide four general results.
\begin{enumerate}[leftmargin=*, itemsep=2pt]
\item \textbf{Reduction.} The three-stage problem collapses to one-dimensional screening with a monotone allocation index (Proposition~\ref{prop:reduction}).
\item \textbf{General characterization.} For any convex increasing rescue cost $\Cost(x)$, the optimal cap solves the pointwise condition
\[
\Cost'\bigl(b^{\ast}(\theta)\bigr)=\bigl(\gamma\,\omega_b(\theta)/\lambda_T(\theta)\bigr)\,\haz(\theta),
\]
projected to $[0,\bar b]$ and ironed when the virtual term is nonmonotone (see Theorem~\ref{thm:char}). The classic quadratic case is a corollary (Proposition~\ref{prop:opt_cap}).
\item \textbf{A unified knife-edge.} A self-consistent no-rescue regime obtains iff
\[
\Cost'(0^+)\;\ge\;\sup_{\theta}\frac{\gamma\,\omega_b(\theta)}{\lambda_T(\theta)}\,\haz(\theta),
\]
which nests the linear-cost threshold used in applied work (Proposition~\ref{prop:nobailout}).
\item \textbf{Commitment vs.\ discretion.} Without commitment at $t{=}2$ and with a convex continuation loss from residual gaps, realized rescues are \emph{threshold--linear--cap}; the interior slope lowers the effective grant weight $\lambda_T$ proportionally to the probability of being on the interior branch, strengthening SBC incentives (Section~\ref{subsec:disc}).
\end{enumerate}

\paragraph{Roadmap.}
Section~\ref{sec:lit} reviews related work.
Section~\ref{sec:model} introduces the three-stage environment and shows a reduction to one-dimensional screening with a cap as the allocation index.
Section~\ref{sec:stage_game} delivers the hazard-based characterization; Theorem~\ref{thm:char} is the main result.
Section~\ref{sec:Tstar} develops the optimal transfer schedule and closed forms; key results include Proposition~\ref{prop:opt_cap} (quadratic benchmark) and Proposition~\ref{prop:nobailout} (knife-edge for no rescue).
Section~\ref{subsec:disc} contrasts commitment with discretion and derives the threshold--linear--cap rule.
Appendices collect proofs and computational details.

\section{Literature Review}\label{sec:lit}

The paper intersects three strands of work:  
(i)~soft budget constraints (SBC) in multi-tier public finance,  
(ii)~mechanism design with Stackelberg leadership, and  
(iii)~municipal-finance empirics.

\subsection{Soft Budget Constraints}

The SBC idea begins with \citet{kornai1986}.  Early formalizations
\citep{kornai_maskin_roland_2003} show that ex post efficient bailouts
undermine ex ante effort and borrowing; see also
\citet{dewatripont_maskin_1995} for a dynamic commitment model.
Applications include
\citet{weingast_1995},
\citet{bordignon_2001}, and the survey
by \citet{goodspeed_2016}.
Recent papers ask \emph{when} an upper tier can credibly refuse rescues:
\citet{AmadorEtAl2021AER} derive fiscal limits under limited commitment,
\citet{PavanSegal2023JET} study repeated screening, and
\citet{AcemogluJackson2024RESTUD} analyze relational contracts with hidden
actions.  Yet these models stop short of a closed-form, implementable transfer
rule under generic convex cost. We fill that gap by providing a hazard-based
characterization with an implementable payout convention.

\subsection{Mechanism Design and Stackelberg Leadership}

Incentive-compatible grant design dates back to
\citet{bordignon_montolio_picconi_2003} under simultaneous moves.
\citet{toma_2013} introduces leader--follower timing with full information.
\citet{chen_silverman_2019} obtain threshold payments in a one-shot model with asymmetric costs, yet ignore ex post instruments and dynamic credibility.  
Our contribution is a \textit{two-instrument} Stackelberg screen that nests bailout and
no bailout regimes in a single hazard-based condition and allows for general convex rescue costs.
\paragraph{Placement within mechanism design.}
Our approach follows the virtual-surplus tradition of \citet{Myerson1981} and the general toolkit in \citet{LaffontMartimort2002}.
Envelope and differentiation arguments rely on \citet{MilgromSegal2002}.
Relative to models that emphasize limited commitment or relational enforcement---such as \citet{AmadorEtAl2021AER}, \citet{PavanSegal2023JET}, and \citet{AcemogluJackson2024RESTUD}---our contribution is an \emph{implementable two-instrument screening benchmark} that (i) yields a closed-form, hazard-based cap rule under generic convex rescue costs; (ii) provides an explicit knife-edge for no rescue; and (iii) nests commitment and discretion via a unified single-index representation.

\subsection{Municipal Finance Empirics}

Empirical work examines fiscal capacity and service costs \citep{bird_2012,sancton_govcfs_2014}, tax-base sharing
\citep{dahlby_ferede_2021}, and borrowing limits
\citep{found_tompson_2020}. Evidence on municipal-level SBC is growing, e.g.\ \citet{BraccoDoyle2024JPubE}, and cross-country evidence in \citet{Rodden2006}. Our theory offers a benchmark that links bailout policy to mechanism design and clarifies the micro-data needed for identification.

\section{Model}\label{sec:model}

We study the interaction between a single upper-tier government $P$ and a
continuum of local jurisdictions $i\in\mathcal I\subset[0,1]$. 

\paragraph{Type convention.}
Throughout, the private type $\theta\in[\underline\theta,\bar\theta]$ is a \emph{fiscal-need index}:
a higher $\theta$ corresponds to a weaker local tax base / higher per-unit service cost,
hence a larger underlying funding gap.

\subsection{Technologies and Preferences}

\begin{description}[
  style=nextline,
  leftmargin=0pt,
  labelsep=0pt,
  itemindent=0pt,
  itemsep=0pt, parsep=0pt, topsep=2pt,
  font=\bfseries 
]
\item\textbf{Basic services} Each jurisdiction \( i \) delivers a bundle of essential public services
\( q \). The monetary cost is modeled by \( C(q,\theta) \), where \( \theta \)
denotes fiscal need, with higher \( \theta \) implying higher marginal cost.
\item\textbf{Heterogeneous fiscal need} \( \theta \) is drawn
from a continuous distribution \( f(\theta) \) with support
\( [\underline{\theta},\bar{\theta}] \).
\item\textbf{Local effort} Effort \( e \ge 0 \) generates own-source revenue \( R(e,\theta) \) with $R'_e>0$ and $R''_{ee}<0$.
\item\textbf{Disutility of effort} Effort imposes a linear utility cost 
\begin{align*}
\text{Disutility}=-\phi e, \qquad \phi>0.
\end{align*}
\item\textbf{Service and investment utility}
$B(q)$ captures household utility from $q$, while $\Gamma(I)$ is the longer-run payoff from capital $I$, both twice differentiable with diminishing returns.
\end{description}

\paragraph{Transfers}
The leader’s three instruments are:
\begin{enumerate}[label=(\roman*)]
\item \textbf{Unconditional grant} $\tau$ (maps to $T$), set \emph{ex ante};
\item \textbf{Matching transfer} at share $s$ on capital outlays $I$;
\item \textbf{Ex-post bailout} $b\ge 0$ if a realized gap remains after
fiscal shocks $\varepsilon$ are realized.
\end{enumerate}
Before local choices, the leader commits to $(\tau,s,\bar D,\beta,b)$, where $\beta(\cdot)$ is a signal-based payout rule at $t=2$ and $b(\cdot)$ is a type-based cap.

Given these transfers, the \textbf{one-period cash-flow constraint} at $t=1$ (before any payout) is
\begin{align}
  G &= \bigl[C\bigl(q,\theta\bigr)+(1-s)I+rD\bigr] 
        -\bigl[R(e,\theta)+\tau+g+sI+D\bigr] + \varepsilon . \label{eq:gap_pre}
\end{align}
\footnote{Notation reminder: in the reduced-form mechanism we write $T(\theta)$ for the unconditional operating grant that corresponds to the empirical instrument $\tau$; and $g$ captures largely exogenous capital transfers that are treated as constants in screening.}

If $G>0$ a funding gap exists.  The leader observes a noisy signal $\hat{G}=G+\eta$ and pays the \emph{realized payout}
\[
  p(\hat G,\hat\theta)
  \;=\; \mathbf 1\{\hat G>0\}\cdot\min\{\beta(\hat G),b(\hat\theta),\hat G\}
  \;\in\;[0,\hat G],
\]
at $t=2$ under commitment to $(\beta,b)$.

\subsection{Local effort $e$: incentives and marginal condition}\label{sub:effort}

\begin{assumption}[Primitives]\label{ass:primitives}
Types $\theta$ lie in $[\underline{\theta},\bar{\theta}]$ with density $f>0$.
The local cost and revenue functions satisfy, for all $\theta$,
\[
C'_q(q,\theta)>0,\qquad C''_{qq}(q,\theta)\ge0,\qquad
R'_e(e,\theta)>0,\qquad R''_{ee}(e,\theta)<0.
\]
\end{assumption}

\begin{assumption}[Observation \& rule regularity]\label{ass:regularity}
The signal rule $\beta$ is nondecreasing and a.e.\ differentiable with slope in $[0,1)$; the audit noise $\eta$ has a continuous density $f_\eta$ with bounded tails; and $(e,\theta)\mapsto R'_e(e,\theta)$ is continuous. 
\end{assumption}

\begin{assumption}[Signal MLRP]\label{ass:mlrp}
For $\theta_2>\theta_1$, the family $\{\hat G\mid\theta\}$ satisfies MLRP, so for any nondecreasing $\varphi$, $\E[\varphi(\hat G)\mid\theta_2]\ge \E[\varphi(\hat G)\mid\theta_1]$. 
\end{assumption}

\begin{assumption}[IFR on types]\label{ass:IFR}
The type distribution has increasing failure rate: $\,\haz(\theta)=f(\theta)/\barF(\theta)\,$ is weakly increasing on $[\underline\theta,\bar\theta]$. This assumption is used to ensure monotonic implementability (and ironing) of the allocation; the pointwise characterization in Theorem~\ref{thm:char} does not require IFR.
\end{assumption}

\begin{assumption}[Restriction to threshold signal rules]\label{ass:beta-threshold}
We restrict attention to nondecreasing, piecewise-constant (threshold) signal-based payout rules $\beta$. This class is consistent with administrative practice and eliminates effort--report interactions almost everywhere.
\end{assumption}

\paragraph{Observation frictions and default.}
Default occurs iff $p(\hat G,\hat\theta)<G$. Because
of information frictions, municipalities rationally expect a positive rescue probability $\pi>0$. Anticipating the chance of a bailout, they optimally reduce tax effort $e$ and rely more on debt $D$.\footnote{For axiomatic treatments of decision under noisy or imperfect perception, see \citet{PivatoVergopoulos2020JME}.}

\begin{assumption}[Effort independence via threshold $\beta$]\label{ass:report-invariance}
Under Assumptions~\ref{ass:primitives}--\ref{ass:beta-threshold}, the first-order condition for effort on the cap-slack branch contains no term depending on the report $\hat\theta$; $e^{\ast}(\theta)$ is report-independent up to boundary-density terms on the cap-binding set.
\end{assumption}

\noindent\emph{Technical detail.} See Lemma~\ref{lem:e-indep} in Appendix~\ref{app:proofs}.

\subsection{Annual timeline: three decision stages}\label{subsec:timeline}

We normalize one fiscal year to $t\in\{0,1,2\}$; the sequence repeats every year.

\paragraph{Stage $t=0$ (policy commitment).}
The leader announces $\Pi=(\tau,s,\bar D,\beta,b)$ where $\beta:\R_+\to\R_+$ is the signal-based payout rule implemented at $t=2$ and $b(\cdot)$ is a type-based cap.

\paragraph{Stage $t=1$ (local choices and gap realization).}
After observing its private type $\theta\sim f$, the municipality selects effort $e\ge0$, capital $I\ge0$, and debt $0\le D\le\bar D$. A mean-zero fiscal shock $\varepsilon$ is realized and the pre-payout gap is $G$ from \eqref{eq:gap_pre}.

\paragraph{Stage $t=2$ (signal, payout, default test).}
The leader observes $\hat G=G+\eta$ and pays $p(\hat G,\hat\theta)=\mathbf 1\{\hat G>0\}\min\{\beta(\hat G),b(\hat\theta),\hat G\}$. Default occurs iff $p(\hat G,\hat\theta)<G$.

\subsection{Local optimization at $t=1$}

Given policy $\Pi$, the municipality solves
\begin{equation}\label{eq:local_prog}
\begin{aligned}
\max_{e,q,I,D}\; & 
  \E_{\varepsilon,\eta}\!\Bigl[
      B(q)+\Gamma(I)+R(e,\theta)-\phi\,e
      -\varphi\,\mathbf1\!\{\,p(\hat G,\hat\theta)<G\,\}
      +\omega_b\,p(\hat G,\hat\theta)
  \Bigr] \\[-2pt]
\text{s.t.}\quad
&G = C(q,\theta)+(1-s)I+rD
      -\bigl[R(e,\theta)+\tau+g+sI+D\bigr]
      +\varepsilon, \\[4pt]
&0 \le D \le \bar D,\qquad e,q,I \ge 0 .
\end{aligned}
\end{equation}

\noindent
Since $\partial \E[p(\hat G,\hat\theta)]/\partial e=-R'_e(e,\theta)\,\E[\beta'(\hat G)\,\mathbf1\{\beta(\hat G)<b(\hat\theta)\}]$, the interior FOC on the signal branch is
\begin{equation}\label{eq:FOC_e_correct}
  R'_e\!\bigl(e^{\ast}(\theta);\theta\bigr)\,
  \Bigl\{
    1
    +\varphi\,\E\!\left[f_\eta\!\bigl(\hat{G}-\beta(\hat{G})\bigr)\bigl(1-\beta'(\hat G)\bigr)\right]
    -\omega_b\,\E\!\bigl[\beta'(\hat G)\,\mathbf1\{\beta(\hat G)<b(\hat\theta)\}\bigr]
  \Bigr\}
  = \phi .
\end{equation}
For threshold (piecewise constant) $\beta$, $\beta'(\hat G)=0$ a.e., hence the last term vanishes.


\subsection{Reduction to One--Dimensional Screening}
\label{subsec:reduction}
\begin{remark}[On rare cap binding]\label{rem:cap-slack-optional}
Our main differentiation steps rely only on continuous audit noise and threshold (piecewise-constant) $\beta$, which make boundary sets Lebesgue-null. A stronger “rare cap binding” condition can be imposed as a robustness convenience but is not required for the results; see Lemma~\ref{lem:boundary-bound}.
\end{remark}
\paragraph{Step~1. Local optimization.}
Fix $(\tau,s,\bar D,\beta)$ and a true type $\theta$. Minimizing the pre-bailout resource block delivers $(q^{\ast},I^{\ast},D^{\ast})$ and reduced cost $C_0(\theta)$, independent of the report.

\paragraph{Step~2. Effort choice.}
Effort $e^\ast(\theta)$ is pinned down by \eqref{eq:FOC_e_correct} (cap slack almost everywhere).

\paragraph{Step~3. Quasi-linear reduced form with payout cap.}
Define the \emph{expected payout under cap}, conditional on type,
\[
  \tilde b(\hat\theta;\theta)
   = \E\bigl[\min\{\beta(\hat G),b(\hat\theta)\}\,\big|\,\theta\bigr].
\]
Then the interim utility from reporting $\hat\theta$ can be written
\begin{equation}
  U_L(\hat\theta,\theta)=\lambda_T(\theta)\,T(\hat\theta)
    + \omega_b\,\tilde b(\hat\theta;\theta)+K(\theta),
  \label{eq:UL-reduced}
\end{equation}
with a grant weight
\[
  \lambda_T(\theta)\;=\;\omega_T\; -\; \omega_b\,\E\Bigl[\frac{\partial p}{\partial T}\,\Big|\,\theta\Bigr].
\]
Under Assumption~\ref{ass:beta-threshold} and continuous noise, $\beta'(\hat G)=0$ almost everywhere, so $\lambda_T(\theta)=\omega_T$.

\begin{lemma}[Grant crowd-out factor]\label{lem:lambdaT}
With $\hat G=G+\eta$ and $G$ decreasing in $T$ one-for-one, the marginal effect of $T$ on the realized payout is
\[
  \frac{\partial}{\partial T}\E[p(\hat G,\hat\theta)\mid\theta]
  = -\,\E\!\left[\beta'(\hat G)\,\mathbf 1\{\beta(\hat G)<b(\hat\theta)\}\,\middle|\,\theta\right]
  + \text{boundary terms}.
\]
If $\beta$ is threshold (piecewise constant), $\beta'(\hat G)=0$ a.e. and the boundary terms vanish under continuous noise, hence $\lambda_T(\theta)=\omega_T$.
If $\beta$ has an interior linear branch with slope $m\in(0,1)$ on the cap-slack set, then $\lambda_T(\theta)=\omega_T-\omega_b\,m\cdot \PP_\theta[\text{cap slack and } \hat G \text{ in linear range}]$.
\end{lemma}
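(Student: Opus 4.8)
The plan is to differentiate the conditional expectation of the payout $p(\hat G,\hat\theta)$ in $T$ by a change of variables that moves the $T$-dependence onto the smooth density of $\hat G$, and then to evaluate the resulting a.e.\ derivative of $p$ branch by branch. First I would record that, from the gap equation \eqref{eq:gap_pre} and $\hat G=G+\eta$, the grant enters only through a one-for-one downward shift, so $\partial\hat G/\partial T=-1$ for every realization of $(\varepsilon,\eta)$. Writing the conditional density of $\hat G$ given $\theta$ as $h_\theta(x;T)=\rho_\theta(x+T)$, where $\rho_\theta$ collects the $T$-independent part of the convolution of the fiscal shock and the audit noise, I would substitute $y=x+T$ to obtain
\[
  \E[p(\hat G,\hat\theta)\mid\theta]=\int p(y-T,\hat\theta)\,\rho_\theta(y)\,\dd y .
\]

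Next I would differentiate under the integral sign. Since $p$ is Lipschitz in its first argument with constant at most one wherever $\beta$ is (it is a minimum of $\hat G$ and the nondecreasing rule $\beta$ of slope $<1$, truncated at zero), and since $\rho_\theta$ has bounded tails by Assumption~\ref{ass:regularity}, dominated convergence applies and yields
\[
  \frac{\partial}{\partial T}\E[p(\hat G,\hat\theta)\mid\theta]
  =-\,\E\!\left[\frac{\partial p}{\partial \hat G}(\hat G,\hat\theta)\,\middle|\,\theta\right].
\]
On the cap-slack branch $\{\beta(\hat G)<b(\hat\theta)\}$, where the payout tracks the signal rule and the rule does not exceed the gap (so the third argument of the $\min$ is inactive), one has $\partial p/\partial\hat G=\beta'(\hat G)$; on the cap-binding set $p=b(\hat\theta)$ is flat so the derivative is zero; and the switching surfaces $\{\beta(\hat G)=b(\hat\theta)\}$, $\{\beta(\hat G)=\hat G\}$ together with the default edge $\{\hat G=0\}$ are Lebesgue-null under continuous audit noise, so they enter only through boundary terms. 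Collecting these gives the displayed identity
\[
  \frac{\partial}{\partial T}\E[p(\hat G,\hat\theta)\mid\theta]
  =-\,\E\!\left[\beta'(\hat G)\,\mathbf 1\{\beta(\hat G)<b(\hat\theta)\}\,\middle|\,\theta\right]+\text{boundary terms}.
\]

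The two regimes then follow by reading off $\beta'$ in this identity and feeding the payout response into the effective grant weight $\lambda_T(\theta)$ of \eqref{eq:UL-reduced}, which nets the direct weight $\omega_T$ against the $\omega_b$-scaled crowd-out of the payout. For a piecewise-constant (threshold) rule, $\beta'(\hat G)=0$ at every non-switching point, the switching points carry no mass under continuous noise, and the boundary terms vanish; the marginal payout effect is then zero and $\lambda_T(\theta)=\omega_T$. For a rule with an interior linear segment of slope $m\in(0,1)$ active on the cap-slack set, $\beta'(\hat G)=m$ exactly on the intersection of that segment with the cap-slack region, so the expectation collapses to $m\,\PP_\theta[\text{cap slack and }\hat G\text{ in the linear range}]$ and $\lambda_T(\theta)=\omega_T-\omega_b\,m\,\PP_\theta[\text{cap slack and }\hat G\text{ in the linear range}]$.

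The step I expect to be the main obstacle is controlling the boundary terms. The $\min$-truncation makes $\partial p/\partial\hat G$ genuinely discontinuous across the switching surfaces, and any jump in a piecewise-constant $\beta$ would inject a Dirac contribution—weighted by the density of $\hat G$ at the jump—that does not automatically disappear merely because $\beta'=0$ a.e. The claim that these terms vanish relies on $p$ being continuous in $\hat G$ (so the only non-smoothness is kinks, not jumps in the realized payout) together with continuity of the audit-noise density, which makes every switching surface a null set and kills the would-be boundary mass; this is exactly the role of the threshold restriction in Assumption~\ref{ass:beta-threshold} and the null-set convention in Remark~\ref{rem:cap-slack-optional}. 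I would therefore isolate and bound these boundary contributions first, via Lemma~\ref{lem:boundary-bound}, and only then read off the clean branch-wise derivative above.
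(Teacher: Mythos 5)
Your calculation follows the same route as the paper's own proof: exploit $\partial_T\hat G=-1$ (you do it via a change of variables on the signal density, the paper by differentiating $p$ pathwise in $T$), evaluate the a.e.\ derivative branch by branch ($\beta'(\hat G)$ on the cap-slack set, $0$ on the cap-binding set), and relegate the switching sets to ``boundary terms.'' The branch-wise derivatives and both regime conclusions match the paper step for step.

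The genuine gap is in the step you yourself single out as the main obstacle, and your resolution of it is false. You claim the would-be Dirac contributions die because ``$p$ is continuous in $\hat G$,'' and you attribute this continuity to the threshold restriction of Assumption~\ref{ass:beta-threshold}. The opposite is true: a nondecreasing, piecewise-constant $\beta$ with $\beta(0)=0$ that is not identically zero must have an upward jump, and $p=\mathbf 1\{\hat G>0\}\min\{\beta(\hat G),b(\hat\theta),\hat G\}$ inherits it --- just below a trigger $G^{*}$ the payout is $0$, just above it equals $\min\{\beta_{0},b(\hat\theta),G^{*}\}>0$. The Lipschitz/dominated-convergence interchange $\partial_T\E[p]=-\E[\partial_{\hat G}p]$ is exactly what fails across such a jump. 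Concretely, take $\beta(\hat G)=\beta_{0}\mathbf 1\{\hat G\ge G^{*}\}$ with $0<\beta_{0}\le\min\{G^{*},b(\hat\theta)\}$; then
\[
\E[p(\hat G,\hat\theta)\mid\theta]=\beta_{0}\,\PP_\theta[\hat G\ge G^{*}],
\qquad
\frac{\partial}{\partial T}\E[p(\hat G,\hat\theta)\mid\theta]=-\,\beta_{0}\,f_{\hat G\mid\theta}(G^{*}),
\]
where $f_{\hat G\mid\theta}$ denotes the conditional density of $\hat G$, because the distribution of $\hat G$ shifts down one-for-one with $T$. Continuous audit noise does not kill this term; it is precisely what makes it well defined and generically nonzero, so for genuine threshold rules the ``boundary terms'' are jump-size-times-density terms, not null-set contributions, and $\lambda_T(\theta)=\omega_T$ holds exactly only if the jumps of $\beta$ sit where the conditional density of $\hat G$ vanishes. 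Be aware that the paper's own proof shares this soft spot --- it argues that the argmin-switching sets have Lebesgue measure zero and concludes their contribution vanishes, which disposes of kink contributions but not of jump contributions. Your instinct to isolate and bound these terms via Lemma~\ref{lem:boundary-bound} was the right one; it is the continuity claim you use to eliminate them that fails, so as written the proof does not establish the threshold-rule conclusion $\lambda_T(\theta)=\omega_T$.
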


\begin{proposition}[Reduction]\label{prop:reduction}
Under Assumptions~\ref{ass:primitives}, \ref{ass:regularity}, \ref{ass:mlrp} and~\ref{ass:beta-threshold}, the original
moral-hazard problem is equivalent to a direct mechanism in which
municipal interim utility is quasi-linear in the \emph{cap parameter} $b$ through $\tilde b(\hat\theta;\theta)=\E[\min\{\beta(\hat G),b(\hat\theta)\}\mid\theta]$ as in \eqref{eq:UL-reduced}.
\end{proposition}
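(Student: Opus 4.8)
The plan is to convert the three-step decomposition flagged in Steps~1--3 into a separation-plus-collection argument. \emph{First}, I would fix the committed policy $(\tau,s,\bar D,\beta)$ and the true type $\theta$ and show that the real choices $(e,q,I,D)$ optimize \emph{independently of the report} $\hat\theta$. The only channel through which these variables touch the payout is the expected term $\E[\min\{\beta(\hat G),b(\hat\theta)\}\mid\theta]$, and under Assumption~\ref{ass:beta-threshold} the map $\hat G\mapsto\beta(\hat G)$ is piecewise constant, so $\beta'(\hat G)=0$ almost everywhere. Differentiating the expected payout with respect to any real variable (each of which shifts $G$, hence $\hat G$, one-for-one up to sign) therefore produces \emph{only} boundary contributions supported on the kink set $\{\hat G:\beta(\hat G)=b(\hat\theta)\}$ and the jump points of $\beta$; by the continuous-noise hypothesis in Assumption~\ref{ass:regularity} these sets are Lebesgue-null and the boundary terms vanish. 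Consequently the first-order conditions for $(q,I,D)$ reduce to minimizing the pre-payout resource block, yielding $(q^{\ast},I^{\ast},D^{\ast})$ and the scalar reduced cost $C_0(\theta)$, while the effort FOC collapses to \eqref{eq:FOC_e_correct} with its last term annihilated, pinning down a report-independent $e^{\ast}(\theta)$ (Assumption~\ref{ass:report-invariance}; Lemma~\ref{lem:e-indep}).

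\emph{Second}, I would substitute the optimal real choices back into the objective in \eqref{eq:local_prog} and sort the resulting expression by its dependence on the report. Exactly two channels carry $\hat\theta$: the grant $\tau\mapsto T(\hat\theta)$, which enters linearly, and the cap $b(\hat\theta)$, which enters only through $\tilde b(\hat\theta;\theta)=\E[\min\{\beta(\hat G),b(\hat\theta)\}\mid\theta]$. The effective grant weight is $\lambda_T(\theta)=\omega_T-\omega_b\,\E[\partial p/\partial T\mid\theta]$; by Lemma~\ref{lem:lambdaT} the crowd-out term is a multiple of $\E[\beta'(\hat G)\mathbf 1\{\beta(\hat G)<b(\hat\theta)\}\mid\theta]$, which vanishes for threshold $\beta$, so $\lambda_T(\theta)=\omega_T$. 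Every remaining piece---$B(q^{\ast})$, $\Gamma(I^{\ast})$, $R(e^{\ast},\theta)-\phi e^{\ast}$, the reduced cost $C_0(\theta)$, and the default-penalty expectation---depends on $\theta$ alone and is absorbed into $K(\theta)$. This delivers precisely the quasi-linear form \eqref{eq:UL-reduced}, establishing that interim utility is affine in $T(\hat\theta)$ and quasi-linear in the cap through $\tilde b$.

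\emph{Third}, I would close the equivalence by noting that a mechanism specified over the full action profile is, on the optimized path, fully summarized by the pair of schedules $(\hat\theta\mapsto T,\hat\theta\mapsto b)$ with interim payoff \eqref{eq:UL-reduced}; since only $T$ and $\tilde b$ respond to the report, the cap $b$ is the single allocation index and $T$ plays the role of a quasi-linear transfer, exactly the one-dimensional screening structure claimed. Monotone implementability is then inherited from Assumption~\ref{ass:mlrp}: MLRP makes $\hat G$ stochastically increasing in $\theta$, so $\tilde b(\hat\theta;\theta)$ has increasing differences in $(\hat\theta,\theta)$---raising the cap benefits higher-need types more---yielding the single-crossing property that underwrites incentive-compatible, monotone caps.

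\emph{Main obstacle.} The crux is the boundary analysis in the first step: one must verify rigorously that differentiating the expected payout through both the $\min\{\cdot,\cdot\}$ kink and the discontinuities of the piecewise-constant $\beta$ leaves only measure-zero boundary terms, so that the real choices genuinely decouple from the report and from the payout's marginal response. Everything downstream---the clean weight $\lambda_T(\theta)=\omega_T$, the collection into $K(\theta)$, and the resulting quasi-linearity---hinges on this null-set claim, for which Remark~\ref{rem:cap-slack-optional} and Lemma~\ref{lem:boundary-bound} are the intended technical backstops.
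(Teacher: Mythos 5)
Your proposal is correct and follows essentially the same route as the paper's own proof: report-independence of the real choices via the threshold-$\beta$, null-boundary argument (Lemma~\ref{lem:e-indep}, Lemma~\ref{lem:boundary-bound}), collection of the report-dependent terms into the quasi-linear form \eqref{eq:UL-reduced} with $\lambda_T(\theta)=\omega_T$ via Lemma~\ref{lem:lambdaT}, and closure of the equivalence through monotone implementability of the cap index. The only cosmetic difference is that you derive single crossing directly from MLRP (Assumption~\ref{ass:mlrp}) where the paper invokes Assumption~\ref{ass:sc}, but since the paper justifies that assumption by exactly the same MLRP argument, the substance is identical.
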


\begin{remark}[Constant marginal utility of bailouts]\label{rem:omega-const}
Lemma~\ref{lem:lambdaT} implies that, under threshold $\beta$, the grant weight equals $\omega_T$. If instead a discretionary linear segment applies (Section~\ref{subsec:disc}), the weight falls below $\omega_T$ proportionally to the slope and the probability of being on that linear branch.
\end{remark}

\begin{remark}[Effect of default-loss term and boundary control]\label{rem:defaultloss}
Because the default indicator flips only on the cap-binding set, boundary contributions are controlled by Lemma~\ref{lem:boundary-bound} and are uniformly bounded by a constant times the cap-binding tail probability $\PP_\theta[\beta(\hat G)\ge b(\theta)]$. Hence the marginal effect of a report on the $-\varphi\,\mathbf1\{p<G\}$ term is negligible whenever this tail probability is small, without imposing a separate “rare cap binding” assumption.
\end{remark}

\subsection{Single--Period Mechanism Design}
\label{sec:stage_game}
We henceforth work with the reduced form \eqref{eq:UL-reduced}, and with the leader's cost taken in \emph{expectation} over the realized payout.
\paragraph{Preferences and objectives.}
The local government's interim utility is quasi-linear with marginal weights $\omega_T(\theta)$ on grants and $\omega_b(\theta)$ on realized bailouts. These weights enter the analysis only through the IC/envelope terms.
\emph{Notation.} To avoid confusion with the local production cost $C(q,\theta)$, we denote the Province's \emph{rescue resource cost} by $\Cost(x)$.
\paragraph{Leader's cost.}
Given a cap profile $b(\theta)$, the Province's expected resource cost at type $\theta$ is
\[
  \E\Bigl[ \Cost\bigl(\min\{\beta(\hat G),b(\theta)\}\bigr) \Bigm|\,\theta\Bigr]
  \;+\; \gamma\,T(\theta),
\]
with $\Cost$ convex, increasing and $\Cost'(0^+)=\alpha$. The preference weights $\omega_T,\omega_b$ appear only in the local government's IC via $\lambda_T(\theta)$ and do not enter the Province's resource-cost objective directly.

\paragraph{Envelope and weights.}
With $V(\theta)=U_L(\theta,\theta)$ and $V(\underline\theta)=\underline U$,
\begin{equation}\label{eq:envelope-correct}
V'(\theta)=\lambda_T'(\theta)\,T(\theta)+\omega_b\,\partial_\theta\tilde b(\theta;\theta)+K'(\theta).
\end{equation}

\begin{remark}[Baseline: effective grant weight on the local side]\label{rem:lambdaT-const}
Under threshold $\beta$ with continuous audit noise, the \emph{effective grant weight in the local government's IC/envelope} equals $\lambda_T(\theta)\equiv\omega_T$; it is not a weight in the Province's resource-cost objective. Hence \eqref{eq:envelope-correct} reduces to
\[
V'(\theta)=\omega_b\,\partial_\theta\tilde b(\theta;\theta)+K'(\theta).
\]
If the $t{=}2$ rule has an interior linear branch with slope $m\in(0,1)$ on the cap-slack set, then
$\lambda_T(\theta)=\omega_T-\omega_b\,m\cdot \PP_\theta[\text{cap slack and }\hat G\text{ in the linear range}]$.
\end{remark}

\begin{assumption}[Single crossing in the allocation index]\label{ass:sc}
Define the allocation index for report $\hat\theta$ at true type $\theta$ by
\[
  x(\hat\theta;\theta)\;=\;\lambda_T(\theta)\,T(\hat\theta)\;+\;\omega_b(\theta)\,\tilde b(\hat\theta;\theta),
\]
so that interim utility is $U_L(\hat\theta,\theta)=x(\hat\theta;\theta)+K(\theta)$ with $K$ absolutely continuous.
Assume the single-crossing condition in $x$:
\[
  \frac{\partial^2 U_L}{\partial\theta\,\partial x}(\hat\theta,\theta)\;\ge\;0\quad\text{for all }(\hat\theta,\theta).
\]
Under Assumption~\ref{ass:mlrp}, this holds because $\partial_\theta \tilde b(\hat\theta;\theta)\ge 0$ for nondecreasing $\beta$.
\end{assumption}

\begin{problem}[Leader’s program --- reduced form]\label{prob:leader}
\[
  \min_{T(\cdot),\,b(\cdot)}
    \; \E_\theta\!\Bigl[\,\E\bigl[\Cost\bigl(\min\{\beta(\hat G),b(\theta)\}\bigr)\bigm|\theta\bigr]
    + \gamma\,T(\theta)\Bigr]
\]
\[
\text{s.t.}\quad
\begin{cases}
  \text{(IC)} & V'(\theta)=\lambda_T'(\theta)\,T(\theta)+\omega_b\,\partial_\theta\tilde b(\theta;\theta)+K'(\theta),\\[0.5ex]
  \text{(IR)} & V(\theta)\ge\underline U,\quad\forall\theta, \\[0.5ex]
  \text{(LL)} & 0\le T(\theta),\; 0\le b(\theta)\le\bar b,\quad\forall\theta .
\end{cases}
\]
\end{problem}

\begin{theorem}[Characterization under convex rescue cost]\label{thm:char}
Suppose Assumptions~\ref{ass:primitives}--\ref{ass:sc} hold. Then there exists an IC--IR--LL optimal mechanism with a nondecreasing cap schedule $b^\ast(\theta)$ such that, at almost every $\theta$,
\[
\Cost'\bigl(b^\ast(\theta)\bigr)
\;=\;
\frac{\gamma\,\omega_b(\theta)}{\lambda_T(\theta)}\,\haz(\theta),
\qquad \haz(\theta)=\frac{f(\theta)}{\barF(\theta)},
\]
with projection to $[0,\bar b]$ and ironing where the virtual term is nonmonotone. Under threshold $\beta$ with continuous audit noise, $\lambda_T(\theta)\equiv\omega_T$.
\end{theorem}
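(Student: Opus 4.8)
The plan is to solve the relaxed version of Problem~\ref{prob:leader} --- retaining only the local envelope condition \eqref{eq:envelope-correct}, the IR floor, and the limited-liability bounds --- and then to recover full incentive compatibility ex post from the single-crossing property of Assumption~\ref{ass:sc}. First I would determine the binding participation constraint. Since $\beta$ is nondecreasing and the signal family obeys MLRP (Assumption~\ref{ass:mlrp}), the envelope integrand satisfies $\partial_\theta\tilde b(\theta;\theta)\ge 0$, so $V$ is nondecreasing in the fiscal-need type and IR binds at the bottom, $V(\underline\theta)=\underline U$. Integrating \eqref{eq:envelope-correct} from $\underline\theta$ then writes the interim rent as $V(\theta)=\underline U+\int_{\underline\theta}^{\theta}\bigl(\omega_b\,\partial_\theta\tilde b(s;s)+K'(s)\bigr)\,\dd s$, using $\lambda_T'\equiv 0$ under the threshold rule.

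Second, I would eliminate the grant from the objective. Inverting the reduced form \eqref{eq:UL-reduced} gives $T(\theta)=\bigl(V(\theta)-\omega_b\,\tilde b(\theta;\theta)-K(\theta)\bigr)/\lambda_T(\theta)$, so that $\gamma\,T(\theta)$ is expressed through the rent, the expected payout, and the exogenous $K$. Substituting the rent representation for $\E_\theta[V]$ and applying Fubini (integration by parts in $\theta$) turns the rent into a term weighted by the survival function $\barF$; set against the density $f$ that multiplies the direct resource cost, this is precisely where the failure rate $\haz=f/\barF$ is generated. The residual $\underline U$ and $K$ pieces are constants for the cap problem and drop out.

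Third comes the pointwise step. The program reduces to minimizing a single integral $\int_{\underline\theta}^{\bar\theta}\Psi\bigl(b(\theta);\theta\bigr)\,\dd\theta$ in which the allocation index enters only through the scalar $b(\theta)$. A useful simplification is that $\partial_b\,\E[\min\{\beta(\hat G),b\}\mid\theta]=\PP_\theta[\beta(\hat G)>b]$ appears identically in the derivative of the resource cost $\E[\Cost(\min\{\beta,b\})\mid\theta]$ --- there carrying the extra factor $\Cost'(b)$ --- and in the derivative of the payout value $\omega_b\tilde b$; this common cap-binding probability cancels, so that $\Cost'(b)$ stands alone. Combined with the survival-weighted rent term from the previous step, which supplies the virtual weight, setting $\partial_b\Psi=0$ delivers $\Cost'(b^\ast(\theta))=\gamma\,\omega_b(\theta)\,\haz(\theta)/\lambda_T(\theta)$; projection onto $[0,\bar b]$ handles the limited-liability corners, and under threshold $\beta$ the weight collapses to $\lambda_T\equiv\omega_T$.

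Finally I would restore implementability. Because $\Cost'$ is increasing (convex $\Cost$) and $\haz$ is nondecreasing by IFR (Assumption~\ref{ass:IFR}), the candidate $b^\ast=(\Cost')^{-1}\bigl(\gamma\omega_b\haz/\lambda_T\bigr)$ is nondecreasing wherever the virtual weight is monotone, so single crossing certifies global IC; where $\gamma\omega_b\haz/\lambda_T$ is nonmonotone I would iron it, replacing the virtual weight by its monotone envelope and pooling on the corresponding interval, which preserves optimality while enforcing monotonicity. I expect the main obstacle to be the rent accounting in the second and third steps: controlling the distributional type-effect on the cap-binding set --- the cross term $\partial_\theta\PP_\theta[\beta(\hat G)>b]$ --- and justifying differentiation under the expectation, so that the pointwise condition collapses exactly to the clean hazard form with coefficient $\gamma\omega_b\haz/\lambda_T$. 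This is where the threshold-$\beta$ structure with continuous audit noise does the work, rendering the boundary sets Lebesgue-null and bounding their contribution via Lemma~\ref{lem:boundary-bound}; the ironing argument is then the remaining delicate, but routine, ingredient.
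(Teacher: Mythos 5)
Your overall route is the same Myersonian virtual-surplus argument that underlies the paper's own (terse) proof --- IR pinned at $\underline\theta$, envelope integration, elimination of $T$, a pointwise cap condition in which the tail probability $\PP_\theta[\beta(\hat G)\ge b]$ cancels, then projection, ironing, and IFR/single-crossing for implementability (the paper's Lemmas~\ref{lem:capcalc}, \ref{lem:mono-fixed}, \ref{lem:mono_cap} and the proofs of Propositions~\ref{prop:opt_cap} and~\ref{prop:welfare}). The difference is that the paper simply \emph{asserts}, citing ``Myerson's envelope,'' that the virtual marginal benefit of relaxing the cap equals $(\gamma\omega_b/\lambda_T)\,\haz(\theta)\,\PP_\theta[\beta(\hat G)\ge b]$, whereas you try to derive that coefficient from first principles. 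That is exactly where your proposal breaks.

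Carry your own Steps 2--3 through. With $T(\theta)=\bigl(V(\theta)-\omega_b\tilde b(\theta;\theta)-K(\theta)\bigr)/\lambda_T$ and $\E_\theta[V]=\underline U+\int\barF(s)\bigl(\omega_b\,\partial_\theta\tilde b(s;s)+K'(s)\bigr)\,\dd s$, the $b$-dependent part of the leader's objective is
\[
\int\Bigl\{f(\theta)\,\E\bigl[\Cost(\min\{\beta(\hat G),b(\theta)\})\bigm|\theta\bigr]
+\tfrac{\gamma\omega_b}{\lambda_T}\bigl[\barF(\theta)\,\partial_\theta\tilde b(\theta;\theta)-f(\theta)\,\tilde b(\theta;\theta)\bigr]\Bigr\}\dd\theta ,
\]
so the pointwise first-order condition reads
\[
\Cost'\bigl(b(\theta)\bigr)
=\frac{\gamma\omega_b}{\lambda_T}\left(1-\frac{\barF(\theta)}{f(\theta)}\cdot\frac{\partial_\theta\PP_\theta[\beta(\hat G)\ge b]}{\PP_\theta[\beta(\hat G)\ge b]}\right),
\]
i.e.\ the standard virtual-value form in which the \emph{inverse} hazard multiplies the rent cross-term --- structurally different from the theorem's multiplicative $\haz(\theta)$. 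You correctly flag the cross term $\partial_\theta\PP_\theta[\beta(\hat G)\ge b]$ as the main obstacle, but your proposed resolution --- that threshold $\beta$ with continuous audit noise renders it a Lebesgue-null boundary contribution controlled by Lemma~\ref{lem:boundary-bound} --- is wrong. That term is not a boundary effect; it is the MLRP informational effect (Assumption~\ref{ass:mlrp}), the very thing that makes $\partial_\theta\tilde b\ge 0$, generates positive rents, and gives you single crossing (Assumption~\ref{ass:sc}); it survives threshold $\beta$ and continuous noise. Worse, the fix is self-defeating: if that cross term really were negligible, the $\barF$-weighted rent term would drop out and the FOC would collapse to $\Cost'(b)=\gamma\omega_b/\lambda_T$ with no hazard at all, contradicting the formula you claim to recover. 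So Step 3 as written cannot ``deliver'' the theorem's condition; to land on $\Cost'(b^\ast(\theta))=(\gamma\omega_b/\lambda_T)\haz(\theta)$ you need the step the paper actually takes --- treating the hazard-weighted virtual benefit as the content of Myerson's lemma for this allocation index (Lemma~\ref{lem:mono_cap}, Proposition~\ref{prop:welfare}) --- not the integration-by-parts accounting you set up, which produces a genuinely different virtual weight. (A secondary issue, which the paper itself acknowledges: inverting \eqref{eq:UL-reduced} to eliminate $T$ presumes $T$ is interior, yet on the rescue region LL forces $T=0$, so that substitution needs a complementary-slackness argument there.)
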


\noindent\textit{Corollaries.} The main characterization yields two direct corollaries:  
(i) the quadratic case (Proposition~\ref{prop:opt_cap}); and  
(ii) the no-rescue knife-edge (Proposition~\ref{prop:nobailout}).

\section{Optimal Transfer Schedule}\label{sec:Tstar}

\begin{lemma}[Conditional cap--min calculus]\label{lem:capcalc}
Let $F_\beta(\cdot\mid\theta)$ be the c.d.f.\ of $\beta(\hat G)$ conditional on type with continuous density. For any cap $b\ge0$,
\[
  \frac{\partial}{\partial b}\E\bigl[\min\{\beta(\hat G),b\}\mid\theta\bigr]
  \;=\; \PP_\theta\!\bigl[\beta(\hat G)\ge b\bigr],\qquad
  \frac{\partial}{\partial b}\E\bigl[\min\{\beta(\hat G),b\}^{2}\mid\theta\bigr]
  \;=\; 2b\,\PP_\theta\!\bigl[\beta(\hat G)\ge b\bigr].
\]
\end{lemma}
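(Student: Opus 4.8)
The plan is to reduce both identities to the fundamental theorem of calculus after a layer-cake (tail-integral) rewriting, so that the seemingly delicate boundary terms never appear. Write $Y=\beta(\hat G)$, which is nonnegative since $\beta:\R_+\to\R_+$, and let $F_\beta(\cdot\mid\theta)$ and $f_\beta(\cdot\mid\theta)$ denote its conditional c.d.f.\ and (continuous) density. First I would record the pointwise identity $\min\{Y,b\}=\int_0^b \mathbf 1\{Y>t\}\,\dd t$, which holds because the integrand equals $1$ on $[0,\min\{Y,b\})$ and $0$ thereafter. Taking conditional expectations and exchanging the expectation with the $t$-integral by Tonelli (the integrand is nonnegative) gives $\E[\min\{Y,b\}\mid\theta]=\int_0^b \PP_\theta[Y>t]\,\dd t$. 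Since the density is continuous, $t\mapsto \PP_\theta[Y>t]$ is continuous, so the fundamental theorem of calculus yields $\partial_b\E[\min\{Y,b\}\mid\theta]=\PP_\theta[Y>b]$, which equals $\PP_\theta[Y\ge b]=\PP_\theta[\beta(\hat G)\ge b]$ because a continuous density places no atom at $b$.

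For the second identity I would use the analogous representation $Z^2=\int_0^\infty 2u\,\mathbf 1\{Z>u\}\,\dd u$ with $Z=\min\{Y,b\}$; since $\min\{Y,b\}>u$ holds iff $Y>u$ and $u<b$, this collapses to $\min\{Y,b\}^2=\int_0^b 2u\,\mathbf 1\{Y>u\}\,\dd u$. Expectation and Tonelli give $\E[\min\{Y,b\}^2\mid\theta]=\int_0^b 2u\,\PP_\theta[Y>u]\,\dd u$, and the same FTC step delivers $\partial_b\E[\min\{Y,b\}^2\mid\theta]=2b\,\PP_\theta[Y>b]=2b\,\PP_\theta[\beta(\hat G)\ge b]$.

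As an independent check I would differentiate the split form $\E[\min\{Y,b\}\mid\theta]=\int_0^b y\,f_\beta(y\mid\theta)\,\dd y+b\,(1-F_\beta(b\mid\theta))$ directly by Leibniz: the boundary term $b\,f_\beta(b\mid\theta)$ from the first integral cancels the $-b\,f_\beta(b\mid\theta)$ produced by differentiating $b\,(1-F_\beta(b\mid\theta))$, leaving $1-F_\beta(b\mid\theta)=\PP_\theta[Y\ge b]$; the quadratic case is identical, with the $b^2 f_\beta(b\mid\theta)$ terms cancelling. This cancellation is the only substantive point, and it makes clear that there is no real obstacle beyond bookkeeping: the continuous-density hypothesis is doing double duty---it kills the atom at $b$ (so that $\PP_\theta[Y>b]=\PP_\theta[Y\ge b]$) and it renders the integrand continuous so that the FTC/Leibniz differentiation is legitimate. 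I would also note that no moment assumption is needed, since $\min\{Y,b\}\le b$ keeps every integral finite.
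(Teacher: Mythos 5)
Your proof is correct, and both identities check out: the layer-cake representation $\min\{Y,b\}=\int_0^b \mathbf 1\{Y>t\}\,\dd t$, the Tonelli exchange, and the FTC step are all valid, as is the quadratic analogue via $\min\{Y,b\}^2=\int_0^b 2u\,\mathbf 1\{Y>u\}\,\dd u$. For comparison: the paper states Lemma~\ref{lem:capcalc} \emph{without} any proof --- it is treated as a routine calculus fact and invoked directly in the proofs of Lemma~\ref{lem:mono_cap}, Proposition~\ref{prop:opt_cap}, and Proposition~\ref{prop:welfare}. The computation the paper implicitly has in mind is exactly your ``independent check'': split the expectation as $\int_0^b y\,f_\beta(y\mid\theta)\,\dd y+b\,(1-F_\beta(b\mid\theta))$, differentiate by Leibniz, and observe the cancellation of the $b\,f_\beta(b\mid\theta)$ boundary terms. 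Your primary (tail-integral) route is the more robust of the two: it needs only continuity of the conditional c.d.f.\ (no density at all, beyond ruling out an atom at $b$ for the $\ge$ versus $>$ distinction), it never produces boundary terms that must cancel, and your observation that no moment condition is needed because $\min\{Y,b\}\le b$ is a nice touch. In short, you have supplied a complete proof where the paper supplies none, and your argument is if anything slightly more general than the hypothesis of the lemma requires.
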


\begin{proposition}[Quadratic case; closed-form]\label{prop:opt_cap}
If $\Cost(x)=\alpha x+\tfrac{\kappa}{2}x^2$ and $\lambda_T\equiv \omega_T$, then
\[
b^{\ast}(\theta)=\min\!\Bigl\{\bar b,\max\!\bigl\{0,\; \kappa^{-1}\!\bigl((\gamma\omega_b/\omega_T)\,\haz(\theta)-\alpha\bigr)\bigr\}\Bigr\},
\]
and $T^{\ast}$ is determined by the index differential (see \eqref{eq:T-star}). This recovers the threshold--cap geometry and the triple-zone rule with cutoffs $\theta^{\min}$ and $\theta^{\dagger}$ defined in~\eqref{eq:theta-dagger}.
\end{proposition}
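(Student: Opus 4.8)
The plan is to obtain the proposition as a direct specialization of Theorem~\ref{thm:char}, so the only real content is an algebraic inversion plus a monotonicity check. First I would invoke the pointwise first-order condition $\Cost'(b^\ast(\theta))=(\gamma\,\omega_b/\lambda_T(\theta))\,\haz(\theta)$ supplied by the theorem, and substitute both the quadratic marginal cost $\Cost'(x)=\alpha+\kappa x$ and the threshold-regime identity $\lambda_T\equiv\omega_T$ from Lemma~\ref{lem:lambdaT} and Remark~\ref{rem:lambdaT-const}. Solving the resulting linear equation $\alpha+\kappa b=(\gamma\omega_b/\omega_T)\,\haz(\theta)$ for the unconstrained optimizer gives $b(\theta)=\kappa^{-1}\bigl((\gamma\omega_b/\omega_T)\,\haz(\theta)-\alpha\bigr)$, and projecting onto the feasible box $[0,\bar b]$ through the nested $\min/\max$ (as mandated by the (LL) constraint and the projection clause of the theorem) yields the claimed closed form.

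The step requiring genuine care is verifying that no ironing correction is active, since that is precisely what licenses a clean closed form rather than a flattened schedule. I would show that under Assumption~\ref{ass:IFR} the hazard $\haz(\theta)=f(\theta)/\barF(\theta)$ is weakly increasing, so the affine-in-hazard candidate is weakly increasing; truncation of a nondecreasing function onto $[0,\bar b]$ is again nondecreasing, hence the projected $b^\ast$ is monotone. Consequently the single-crossing implementability requirement of Assumption~\ref{ass:sc} is slack and the ironing operator in Theorem~\ref{thm:char} acts as the identity. With the constant weights presumed by the displayed formula, IFR alone suffices; more generally the relevant hypothesis is monotonicity of the virtual term $\omega_b(\theta)\,\haz(\theta)$, under which the same argument goes through verbatim.

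To recover the triple-zone geometry I would read the two kinks of the projection as type cutoffs and use continuity and monotonicity of $\haz$ to establish their existence and uniqueness. The lower cutoff solves $(\gamma\omega_b/\omega_T)\,\haz(\theta^{\min})=\alpha$, the type at which the unconstrained optimizer crosses zero; for $\theta<\theta^{\min}$ the $\max$ clamps $b^\ast=0$, the no-rescue zone. The upper cutoff solves $(\gamma\omega_b/\omega_T)\,\haz(\theta^{\dagger})=\alpha+\kappa\bar b$, the type at which the candidate first reaches $\bar b$; for $\theta>\theta^{\dagger}$ the $\min$ clamps $b^\ast=\bar b$, the cap-binding zone. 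On $(\theta^{\min},\theta^{\dagger})$ the cap rises affinely in $\haz(\theta)$, giving the threshold--cap shape claimed in~\eqref{eq:theta-dagger}. I would also record the degenerate readings: if $\theta^{\min}$ falls below $\underline\theta$ the no-rescue region is empty, and if $\theta^{\dagger}$ exceeds $\bar\theta$ the cap never binds, so the triple zone collapses to fewer pieces.

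Finally, for $T^\ast$ I would substitute $b^\ast$ back into the reduced allocation index $x(\hat\theta;\theta)=\omega_T\,T(\hat\theta)+\omega_b\,\tilde b(\hat\theta;\theta)$ from~\eqref{eq:UL-reduced}, integrate the envelope condition~\eqref{eq:envelope-correct} over types, and pin the constant of integration by the binding (IR) constraint to back out the transfer, recovering~\eqref{eq:T-star}. The hard part is conceptual rather than computational: it is the confirmation that the IFR-driven monotonicity makes the pointwise formula globally optimal, since this is the hypothesis that distinguishes the exact closed form here from the ironed, region-flattened solution of the general theorem.
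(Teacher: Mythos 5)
Your proposal is correct, but it takes a different route from the paper's own proof. You derive the proposition top-down, as a pure corollary of Theorem~\ref{thm:char}: substitute $\Cost'(x)=\alpha+\kappa x$ and $\lambda_T\equiv\omega_T$ into the theorem's pointwise condition, invert, project onto $[0,\bar b]$, and verify that IFR makes the virtual term monotone so the ironing operator is the identity. The paper instead proves the proposition bottom-up from the primal screening problem: it writes the Province's expected quadratic cost under the realized payout $\min\{\beta(\hat G),b(\theta)\}$, uses Lemma~\ref{lem:capcalc} to compute the marginal cost of relaxing the cap as $(\alpha+\kappa b(\theta))\,\PP_\theta[\beta(\hat G)\ge b(\theta)]$, sets this against the Myerson virtual benefit $(\gamma/\lambda_T)\,\omega_b\,\haz(\theta)\,\PP_\theta[\beta(\hat G)\ge b(\theta)]$, and cancels the common tail probability before projecting; monotonicity is then delegated to Lemma~\ref{lem:mono_cap}. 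The difference matters in two ways. Your modular argument is shorter and makes explicit two things the paper leaves implicit --- the verification that ironing is inactive under IFR, and the hazard-equation characterization of the cutoffs $(\gamma\omega_b/\omega_T)\haz(\theta^{\min})=\alpha$ and $(\gamma\omega_b/\omega_T)\haz(\theta^{\dagger})=\alpha+\kappa\bar b$ together with the degenerate cases --- but it treats the theorem as a black box; note that the paper never gives a separate standalone proof of Theorem~\ref{thm:char}, so its proof of this proposition (plus Lemma~\ref{lem:mono_cap}) is effectively where that argument lives, and your corollary-style derivation would be circular if read against the paper's actual logical structure rather than its advertised one. Conversely, the paper's primal derivation exposes a step your route hides: the marginal cost of raising the cap is not $\Cost'(b)$ alone but $\Cost'(b)$ times the probability the cap binds, and the clean FOC of the theorem is valid only because this tail probability multiplies both sides and cancels. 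Your treatment of $T^{\ast}$ (integrate the envelope condition, pin the constant by IR at $\theta^{\min}$) is in substance the same as the paper's appeal to index feasibility with the normalization $T^{\ast}(\theta^{\min})=0$.
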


\begin{equation}\label{eq:b-star}
b^{\ast}(\theta)=\min\!\Bigl\{\bar b,\max\!\bigl\{0,\; \kappa^{-1}\bigl((\gamma\omega_b/\lambda_T)\,\haz(\theta)-\alpha\bigr)\bigr\}\Bigr\}.
\end{equation}

\begin{equation}\label{eq:T-star}
\dd T(\theta)=-(\omega_b/\omega_T)\,\dd\tilde b(\theta),\qquad T(\theta^{\min})=0.
\end{equation}

\begin{equation}
\theta^{\min} \;=\; \inf\{\theta:\, b^{\ast}(\theta)>0\}, 
\qquad
\theta^{\dagger} \;=\; \inf\{\theta:\, b^{\ast}(\theta)=\bar b\}.
\label{eq:theta-dagger}
\end{equation}

\begin{proposition}[General no-rescue knife-edge]\label{prop:nobailout}
Under the conditions of Theorem~\ref{thm:char}, a self-consistent no-rescue regime $b^{\ast}(\theta)\equiv 0$ is optimal iff
\[
\boxed{\, \Cost'(0^+)\;\ge\;\sup_{\theta}\frac{\gamma\,\omega_b(\theta)}{\lambda_T(\theta)}\,\haz(\theta)\, }.
\]
Equivalently, if $\sup_{\theta}\frac{\gamma\,\omega_b(\theta)}{\lambda_T(\theta)}\,\haz(\theta)\le \Cost'(0^+)$ then $b^\ast\equiv 0$; otherwise $b^\ast>0$ on a set of positive measure.
\end{proposition}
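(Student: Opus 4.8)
The plan is to read the knife-edge directly off the pointwise condition of Theorem~\ref{thm:char}, using only the convexity of $\Cost$. Abbreviate the virtual weight by $W(\theta):=\gamma\,\omega_b(\theta)\,\haz(\theta)/\lambda_T(\theta)$, so that the characterization reads $\Cost'(b^\ast(\theta))=W(\theta)$, projected to $[0,\bar b]$ and ironed where $W$ is nonmonotone. Convexity of $\Cost$ makes $\Cost'$ nondecreasing with $\Cost'(0^+)=\alpha$, so the unconstrained root of $\Cost'(b)=W(\theta)$ is nonpositive exactly when $W(\theta)\le\Cost'(0^+)$; equivalently, the projection returns $b^\ast(\theta)=0$ iff $W(\theta)\le\Cost'(0^+)$. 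This monotone-inverse equivalence is the engine of both directions.

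For the ``if'' direction, suppose $\Cost'(0^+)\ge\sup_\theta W(\theta)$, where $W$ is evaluated at the grant weight implied by no rescue. Then $W(\theta)\le\Cost'(0^+)$ at every $\theta$, so the projection returns $b^\ast(\theta)=0$ almost everywhere; the resulting schedule is constant, hence trivially nondecreasing, so no ironing is triggered and $b^\ast\equiv 0$ is the optimal monotone allocation delivered by Theorem~\ref{thm:char}. The regime is \emph{self-consistent}: with $b\equiv 0$ the capped payout is identically zero, the cap-slack event $\{\beta(\hat G)<b(\theta)\}$ has probability zero, and Lemma~\ref{lem:lambdaT} then gives $\lambda_T=\omega_T$---precisely the weight at which the hypothesis was posited---so the no-rescue regime reproduces the weight that justified it.

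For ``only if'' I argue by contraposition. Suppose $\Cost'(0^+)<\sup_\theta W(\theta)$; then there is $\theta_0$ with $W(\theta_0)>\Cost'(0^+)$, and by continuity of $f,\barF,\omega_b,\lambda_T$ (Assumptions~\ref{ass:primitives}--\ref{ass:regularity}) the strict inequality persists on a set $A$ of positive measure. On $A$ the monotone-inverse step runs in reverse, so the pointwise-projected target is strictly positive. To confirm this survives optimization, I exhibit an improving perturbation of the candidate $b\equiv 0$: raising $b$ along a nondecreasing direction $\delta\ge 0$ supported on $A$ changes the Province's reduced objective by $-\int_A\bigl(W(\theta)-\Cost'(0^+)\bigr)\delta(\theta)\,\dd\mu(\theta)<0$, because by Lemma~\ref{lem:capcalc} the probability factor $\PP_\theta[\beta(\hat G)\ge b]$ that multiplies both the marginal benefit $W(\theta)$ and the marginal resource cost $\Cost'(b)$ cancels, leaving exactly the sign of $\Cost'(0^+)-W(\theta)$ at the origin. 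Hence $b\equiv 0$ is not optimal and, by Theorem~\ref{thm:char}, the optimal $b^\ast$ is strictly positive on a set of positive measure.

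The step I expect to be most delicate is ensuring that the necessity argument respects monotone implementability and is not undone by ironing. Under Assumption~\ref{ass:IFR} the raw virtual weight is increasing whenever $\omega_b/\lambda_T$ is flat, so $\{W>\Cost'(0^+)\}$ is an up-set and $\delta=\mathbf 1_{\{W>\Cost'(0^+)\}}$ is an admissible nondecreasing perturbation; in the general nonmonotone case the perturbation must be taken along the ironed schedule. Here the key fact is that ironing is average-preserving on each pooling interval---it replaces $W$ by its monotone envelope while conserving $\int W$ over the pooled block---so a positive-measure excess $\{W>\Cost'(0^+)\}$ cannot be flattened entirely below $\Cost'(0^+)$, and the ironed cap stays strictly positive on a positive-measure set. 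The only remaining subtlety is the endogeneity of $\lambda_T$ under discretion; because the knife-edge is stated at the no-rescue value $\lambda_T=\omega_T$, both directions are compared at a single internally consistent weight, and ``self-consistency'' collapses to the one-shot check above.
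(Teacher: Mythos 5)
Your core argument coincides with the paper's own proof: both directions reduce to the KKT/marginal comparison at $b\equiv 0$, where Lemma~\ref{lem:capcalc} makes the common tail probability $\PP_\theta[\beta(\hat G)\ge b]$ multiply both the marginal resource cost $\Cost'(0^+)$ and the virtual benefit $\frac{\gamma\,\omega_b(\theta)}{\lambda_T(\theta)}\haz(\theta)$, so only their difference matters. In fact your necessity step is \emph{tighter} than the paper's: the paper perturbs $b$ at a single point $\theta^\sharp\in\argmax\haz(\theta)$, which changes an integral objective by zero; you correctly pass to a positive-measure up-set and check that the perturbation is nondecreasing, hence implementable. Your self-consistency check of $\lambda_T=\omega_T$ at $b\equiv 0$ is also a genuine addition the paper leaves implicit.

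However, the ironing step you flag as delicate contains a real error. The claim that ``a positive-measure excess $\{W>\Cost'(0^+)\}$ cannot be flattened entirely below $\Cost'(0^+)$ because ironing is average-preserving'' is false: average preservation is exactly what \emph{permits} full flattening. Ironing replaces $W$ on a pooling block by its (conditional) average, and if the deficit on the block outweighs the excess, that average lies strictly below $\Cost'(0^+)$. Concretely, take $\Cost'(0^+)=\alpha$ and
\[
W(\theta)=\alpha+1 \ \text{ on } [0.4,0.5],\qquad W(\theta)=\alpha-10 \ \text{ elsewhere on } [0,1],
\]
with uniform types. The convex-hull construction pools $[0.4,1]$ at the average value $\alpha-\tfrac{4.9}{0.6}<\alpha$, so the ironed weight is everywhere below $\alpha$ and the optimal monotone cap is $b^\ast\equiv 0$, even though $\sup_\theta W(\theta)>\Cost'(0^+)$. (One can verify directly that any nondecreasing $b\ge 0$ supported where $W>\alpha$ must remain at least as large on the deficit region $[0.5,1]$, so the first-order change $\int(W-\alpha)\,b$ is negative.) Thus in the genuinely nonmonotone case the ``only if'' direction of the knife-edge, stated with the \emph{raw} virtual weight, fails, and no argument can rescue it; the correct statement there would replace $\sup_\theta W(\theta)$ by the supremum of the \emph{ironed} weight. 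Your proof is sound exactly where the paper's is---under IFR with $\omega_b/\lambda_T$ flat enough that $W$ is monotone and no ironing occurs---and your attempt to extend it beyond that case, while more ambitious than the paper's proof (which ignores ironing altogether), rests on a false lemma about ironing.
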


\paragraph{IR normalization and LL implications.}
Normalize $V(\theta^{\min})=\underline U$ and note $\tilde b^{\ast}(\theta^{\min})=0$, whence $T^{\ast}(\theta^{\min})=0$. Because of the negative relation \eqref{eq:T-star}, the LL requirement $T\ge0$ implies that whenever $\tilde b^{\ast}(\theta)>0$ on some region, the optimal $T^{\ast}(\theta)$ is driven to the boundary $T=0$ there, shifting screening to $b(\cdot)$.

\begin{proposition}[Second–best efficiency]\label{prop:welfare}
Under Assumptions \ref{ass:primitives}–\ref{ass:sc}, the allocation in Theorem~\ref{thm:char} is second–best efficient among IC–IR–LL mechanisms; the proof follows a virtual-surplus argument in Appendix~\ref{app:proofs}.
\end{proposition}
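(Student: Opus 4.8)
The plan is to establish optimality by the standard virtual-surplus relaxation, in three moves: use the envelope/IC condition to integrate out information rents and eliminate the grant $T$, collapsing the leader's program to a pointwise problem in the single cap index $b(\cdot)$; solve that pointwise problem and recognize the minimizer as the rule in Theorem~\ref{thm:char}; then restore implementability by ironing and verify that every dropped constraint is satisfied. Concretely, I would first use the envelope representation \eqref{eq:envelope-correct} together with the boundary normalization $V(\theta^{\min})=\underline U$ to write the rent $V(\theta)$ as an integral of $\omega_b\,\partial_\theta\tilde b(\theta;\theta)$ (the $\lambda_T'$ term vanishing under threshold $\beta$). Because interim utility is $U_L=\lambda_T T+\omega_b\tilde b+K$, this pins $T(\theta)$ as an affine functional of the allocation $\tilde b$ through \eqref{eq:T-star}. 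Substituting $T$ into the objective $\E_\theta[\Cost(\cdot)+\gamma T]$ and integrating by parts in $\theta$ converts the $\gamma T$ term into a hazard-weighted contribution, yielding a \emph{virtual resource cost} whose integrand, at each $\theta$, is a convex function of $b(\theta)$ with derivative $\Cost'(b(\theta))-(\gamma\omega_b/\lambda_T)\,\haz(\theta)$.

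Since $\Cost$ is convex and increasing, the virtual integrand is convex in $b(\theta)$, so the first-order condition is both necessary and sufficient for the relaxed (drop-monotonicity) problem. Setting the derivative to zero reproduces the interior rule of Theorem~\ref{thm:char}; imposing the box constraint $0\le b\le\bar b$ and applying KKT yields the projection to $[0,\bar b]$, while the degenerate projection to the lower corner at every $\theta$ is precisely the knife-edge of Proposition~\ref{prop:nobailout}. Convexity guarantees this stationary point is the global minimizer of the relaxed objective pointwise, hence of its expectation.

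Under IFR (Assumption~\ref{ass:IFR}) and single crossing (Assumption~\ref{ass:sc}), the virtual term $(\gamma\omega_b/\lambda_T)\,\haz(\theta)$ is weakly increasing, so the pointwise solution is nondecreasing and thus implementable; on any subinterval where the virtual term is nonmonotone I would apply Myerson ironing, replacing the relaxed solution by its monotone envelope, which leaves the relaxed value unchanged while restoring monotonicity. It then remains to confirm the constraints dropped in the relaxation: monotonicity of $b^\ast$ upgrades local IC to global IC via single crossing; $V'\ge 0$ (from $\partial_\theta\tilde b\ge 0$ under Assumption~\ref{ass:mlrp}) makes IR bind only at $\theta^{\min}$, so IR holds for all types; and the LL condition $T\ge 0$ is handled as in the IR-normalization discussion, with \eqref{eq:T-star} driving $T$ to its boundary wherever $\tilde b^\ast>0$ so that screening is carried by $b$ on that region.

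The delicate point, and the step I expect to be the main obstacle, is the joint treatment of the two instruments under the one-sided limited-liability constraint on $T$: eliminating $T$ is not a free substitution, because $T\ge 0$ binds on a positive-measure region, so I must verify that the relaxation stays tight---that the solution of the unconstrained pointwise problem, after ironing, still satisfies every primal constraint (global IC, monotonicity, IR at all types, and both LL bounds) with the correct complementary slackness. Establishing this self-consistency between the relaxed optimum and the full constraint set, rather than the convex pointwise optimization itself, is the crux of the argument.
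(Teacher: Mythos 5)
Your proposal is correct and takes essentially the same route as the paper's own proof: an envelope-based virtual-surplus relaxation that integrates out rents, reduces the leader's program to a pointwise convex problem in the cap index, solves it by KKT with projection onto $[0,\bar b]$, and restores implementability via IFR/single-crossing and Myerson ironing. You are in fact more explicit than the paper's terse argument about verifying the dropped constraints---notably the one-sided limited-liability constraint $T\ge 0$, which the paper handles only in its separate ``IR normalization and LL implications'' discussion rather than inside the proof of Proposition~\ref{prop:welfare}.
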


\subsection{Comparative statics}\label{subsec:CS2}

Let $\haz(\theta)=f(\theta)/\bar F(\theta)$ and $\lambda_T=\omega_T$ in the baseline. The interior zero solves $\haz(\theta^{\min})=\alpha\,\lambda_T/(\gamma\,\omega_b)$. By the implicit function theorem,
\begin{align*}
  \frac{\partial\theta^{\min}}{\partial\alpha}
      &=\frac{1}{\haz'(\theta^{\min})}\,\frac{\lambda_T}{\gamma\,\omega_b}>0,\qquad
  \frac{\partial\theta^{\min}}{\partial\omega_b}
      =\frac{1}{\haz'(\theta^{\min})}\,\Bigl(-\frac{\alpha\,\lambda_T}{\gamma\,\omega_b^{2}}\Bigr)<0,\\[6pt]
  \frac{\partial\theta^{\min}}{\partial\lambda_T}
      &=\frac{1}{\haz'(\theta^{\min})}\,\frac{\alpha}{\gamma\,\omega_b}>0,\qquad
  \frac{\partial\theta^{\min}}{\partial\gamma}
      =\frac{1}{\haz'(\theta^{\min})}\,\Bigl(-\frac{\alpha\,\lambda_T}{\gamma^{2}\,\omega_b}\Bigr)<0.
\end{align*}

For the interior cap $b_{\max}=(\gamma\omega_b/\lambda_T-\alpha)/\kappa$,
\[
  \frac{\partial b_{\max}}{\partial\kappa}
      =-\,\frac{1}{\kappa^{2}}
         \Bigl(\frac{\gamma\omega_b}{\lambda_T}-\alpha\Bigr)<0,\quad
  \frac{\partial b_{\max}}{\partial\lambda_T}
      =-\,\frac{\gamma\omega_b}{\kappa\lambda_T^{2}}<0,\quad
  \frac{\partial b_{\max}}{\partial\gamma}
      =\frac{\omega_b}{\kappa\lambda_T}>0.
\]

\begin{remark}[Interpretation]\hfill\break
\begin{enumerate}[label=\textbf{\arabic*.}, leftmargin=1.8em, itemsep=2pt, topsep=2pt]
\item \textbf{Trigger boundary.} Increasing $\alpha$ or $\lambda_T$ raises $\theta_{\min}$ (harder to trigger); increasing $\omega_b$ or $\gamma$ lowers $\theta_{\min}$ (easier to trigger).
\item \textbf{Interior cap.} Under the quadratic cost $\mathcal C(x)=\alpha x+\tfrac{\kappa}{2}x^2$, larger $\kappa$ or $\lambda_T$ implies smaller $b_{\max}$; larger $\omega_b$ or $\gamma$ implies larger $b_{\max}$.
\item \textbf{Discretion.} If discretion lowers the effective weight, e.g.,
$\lambda_T^{\mathrm{disc}}=\omega_T-\omega_b\,m\cdot\Pr[\text{interior}]$ with $m\in(0,1)$,
then by the chain rule $\frac{\partial \theta_{\min}}{\partial m}<0$ and $\frac{\partial b_{\max}}{\partial m}>0$.
Thus discretion expands the set of types receiving a positive bailout and raises the interior cap (weaker institutional discipline).
\end{enumerate}
\end{remark}

\section{Policy Implications}\label{sec:policy}

\subsection{Design principles}

\begin{enumerate}[label=\textbf{P\arabic*},leftmargin=*]

\item \textbf{Codify a \emph{triple-zone} rule.}  
      Proposition~\ref{prop:opt_cap} together with \eqref{eq:theta-dagger}
implies a simple menu in the quadratic baseline: \textit{(i)} no transfer when the reported type is below
      $\theta^{\min}$;  
      \textit{(ii)} a flat-to-rising cap on $[\theta^{\min},\,\theta^{\dagger}]$. When the $t{=}2$ rule has an interior slope (discretion), the effective grant weight falls below $\omega_T$, reinforcing the condition.
\item \textbf{A single inequality decides whether bailouts survive.}  
      Under the no-rescue candidate, Proposition~\ref{prop:nobailout} shows that bailouts disappear when
      \[
      \Cost'(0^+) \;\ge\; \sup_{\theta}\frac{\gamma\,\omega_b(\theta)}{\lambda_T(\theta)}\,\haz(\theta).
      \]
\item \textbf{Front-load under softness (discretion).}  
      When the $t{=}2$ payout rule has an interior slope (Section~\ref{subsec:disc}), the effective grant weight falls below $\omega_T$ proportionally to the slope and the probability of being on that interior branch (Lemma~\ref{lem:lambdaT}).
\item \textbf{Make the cap bite by increasing curvature.}  
      Higher curvature of $\Cost$ around the origin (e.g.\ larger $\kappa$ in the quadratic case) reduces $b^\ast$ and tightens the cap.
\end{enumerate}

\subsection{Limited-liability regions}
From \eqref{eq:T-star}, $T^{\ast}(\theta)$ is (weakly) decreasing in $\tilde b^{\ast}(\theta)$.
Hence on any type region where $\tilde b^{\ast}(\theta)>0$, the grant LL constraint ($T\ge0$) typically binds, pushing $T^{\ast}(\theta)$ to $0$ and shifting screening to $b(\cdot)$.
Consequently, interior $T^{\ast}>0$ arises only (i) on the no-rescue region where $b^{\ast}=0$ (i.e.\ $\theta<\theta^{\min}$), or (ii) on ironed segments of the virtual weight when ironing is required under IFR.

\section{Conclusion}\label{sec:conclusion}
This paper recasts upper--lower tier rescues as a two-instrument screening problem with an implementable payout convention
\[
p(\hat G,\hat\theta)=\mathbf{1}\{\hat G>0\}\min\{\beta(\hat G),\,b(\hat\theta),\,\hat G\}.
\]
Four takeaways emerge:
\begin{enumerate}[leftmargin=*]
\item \textbf{One-dimensional reduction.} Under convexity and MLRP, the multi-stage, moral-hazard environment reduces to one-dimensional adverse selection with a \emph{cap parameter} as the allocation index.
\item \textbf{General cap rule.} With any convex rescue cost $\Cost$, the IC--IR--LL optimum is characterized by $\Cost'\bigl(b^\ast(\theta)\bigr)=(\gamma\omega_b/\lambda_T)\haz(\theta)$, with cutoffs pinned down by the hazard $\haz(\theta)$; the quadratic case yields a closed-form \emph{threshold--cap} (Proposition~\ref{prop:opt_cap}).
\item \textbf{Unified regime test.} A self-consistent no-rescue regime obtains iff $\Cost'(0^+) \ge \sup_\theta \frac{\gamma\,\omega_b(\theta)}{\lambda_T(\theta)}\,\haz(\theta)$ (Proposition~\ref{prop:nobailout}).
\item \textbf{Discretion vs.\ commitment.} Without commitment at $t{=}2$, the realized rule becomes \emph{threshold--linear--cap}; the interior slope lowers the effective grant weight and strengthens SBC incentives (Section~\ref{subsec:disc} and Lemma~\ref{lem:lambdaT}).
\end{enumerate}

\section*{Acknowledgements}
I wish to thank Professor Marcus Pivato and Professor Hélène Huber for their valuable suggestions and encouragement. I also acknowledge the helpful discussions with peers at Université Paris 1 Panthéon-Sorbonne, which contributed to improving the clarity of the analysis. All remaining errors are my own.

\newpage
\setcounter{table}{0}
\begin{appendices}

\section{Monte Carlo check with endogenous $\lambda_T$}\label{sec:mc}

We complement the closed-form illustration with a Monte Carlo check that endogenizes the discretion weight $\lambda_T$ via the fixed point
\[
\lambda_T \;=\; \omega_T \;-\; \omega_b\, m \cdot \Pr\!\big(0<b^{\ast}(\theta)<\bar b\big),
\]
where $m\in(0,1)$ is the slope of the interior segment in the $t{=}2$ rule. Given $\lambda_T$, the optimal cap under quadratic rescue cost is
\[
b^{\ast}(\theta)=\Big[\, \tfrac{2}{\kappa}\,\tfrac{\gamma\omega_b}{\lambda_T}\,\theta - \tfrac{\alpha}{\kappa}\,\Big]_0^{\bar b},
\]
so that $\theta^{\min}=\alpha/\big(2(\gamma\omega_b/\lambda_T)\big)$ and
$\theta^{\dagger}=(\kappa\bar b+\alpha)/\big(2(\gamma\omega_b/\lambda_T)\big)$.

\paragraph{Design.}
We draw $N=200{,}000$ types from a Weibull$(k{=}2,\text{scale}{=}1)$ prior (with hazard $\haz(\theta)=2\theta$), set
$(\omega_T,\omega_b,\gamma,\alpha,\kappa,\bar b)=(1,0.8,1,0.2,1,0.8)$ and $m=0.5$,
and iterate on $\lambda_T$ until convergence. For each iteration we compute
$p_{\text{int}}=\Pr(0<b^{\ast}(\theta)<\bar b)$ and update
$\lambda_T\leftarrow \omega_T-\omega_b m\,p_{\text{int}}$.

\paragraph{Findings.}
The fixed point yields $\lambda_T^{\text{disc}}\approx 0.897$ and
$p_{\text{int}}\approx 0.258$. The Monte Carlo estimates of the cutoffs closely match the
closed forms; Fig.~\ref{fig:bstar-mc} overlays the MC binned means on the theoretical $b^{\ast}(\theta)$,
and Table~\ref{tab:numeric_mc} compares theoretical and simulated thresholds.

\begin{table}[H]
\centering
\caption{Monte Carlo vs.\ closed-form thresholds (Weibull-$k{=}2$).}
\label{tab:numeric_mc}
\begin{tabular}{@{}lcccc@{}}
\toprule
Regime & $\lambda_T$ & $\theta^{\min}$ (theory / MC) & $\theta^{\dagger}$ (theory / MC) & $\Pr(0<b^{\ast}<\bar b)$ \\
\midrule
Commitment & $1.000$ & $0.125\ /\ 0.125$ & $0.625\ /\ 0.625$ & $0.308$ \\
Discretion & $0.897$ & $0.112\ /\ 0.112$ & $0.562\ /\ 0.562$ & $0.258$ \\
\bottomrule
\end{tabular}
\end{table}

\begin{figure}[H]
  \centering
  \includegraphics[width=.85\textwidth]{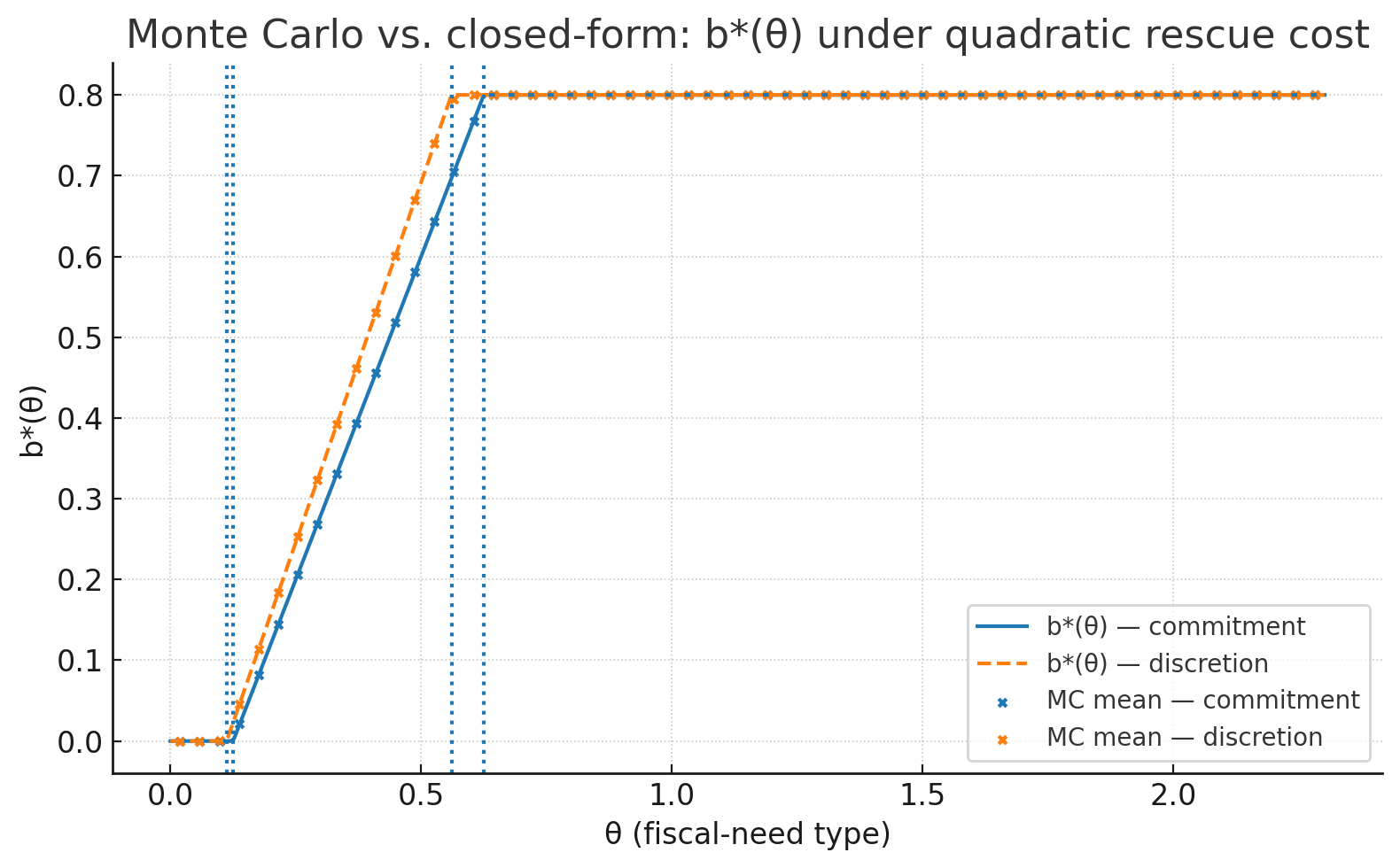}
  \caption{Monte Carlo (binned means) vs.\ closed-form $b^{\ast}(\theta)$ under commitment and discretion.
  Vertical dotted lines mark $\theta^{\min}$ and $\theta^{\dagger}$ in each regime.}
  \label{fig:bstar-mc}
\end{figure}

\paragraph{Remark.}
This experiment does not require simulating audit noise or effort explicitly; it checks implementability geometry through $\Pr(0<b^{\ast}<\bar b)$ and the induced change in $\lambda_T$.
If desired, one can add a continuous audit noise and simulate the implementable payout
$p(\hat G,\hat\theta)=\mathbf 1\{\hat G>0\}\min\{\beta(\hat G),b^{\ast}(\theta),\hat G\}$; the thresholds in $b^{\ast}(\cdot)$ are unchanged.

\section{Symbols used in the model and empirical discussion}\label{tab:symbols}
\begin{table}[H]
\centering
\small
\caption{Symbols used in the model and empirical discussion.}
\label{tab:notation}
\begin{tabular}{@{}lp{9.2cm}@{}}
\toprule
\textbf{Symbol} & \textbf{Description} \\
\midrule
$i$ & Local jurisdiction index. \\
$\theta$ & Fiscal need / gap type (higher $=$ weaker tax base, larger need). \\
$f(\theta),F(\theta)$ & Density and c.d.f.\ of types on $[\underline\theta,\bar\theta]$. \\
$\barF(\theta)$ & Survivor function $1-F(\theta)$. \\
$e$ & Local revenue effort. \\
$R(e,\theta)$ & Own-source revenue function. \\
$q$ & Level of basic services. \\
$C(q,\theta)$ & Cost to produce $q$ given $\theta$. \\
$\tau$ & Unconditional operating grant; maps to $T$ in the model. \\
$g$ & Predictable capital transfer. \\
$s$ & Provincial cost-share rate for capital $I$. \\
$I$ & Local capital investment. \\
$D$ & New debt (subject to approval); $\bar D$ debt limit. \\
$r$ & Debt-service factor on $D$. \\
$G$ & Ex post fiscal gap before payout. \\
$\beta(\hat G)$ & Signal-based payout rule at $t{=}2$. \\
$b(\theta)$ & Type-based cap at $t{=}0$. \\
$p(\hat G,\hat\theta)$ & Realized payout $\mathbf 1\{\hat G>0\}\min\{\beta(\hat G),b(\theta),\hat G\}$. \\
$T(\theta)$ & Ex-ante grant schedule. \\
$\tilde b(\hat\theta;\theta)$ & Expected payout under cap: $\E[\min\{\beta(\hat G),b(\hat\theta)\}\mid\theta]$. \\
$\omega_T,\,\omega_b$ & Marginal utilities of $T$ and realized payout. \\
$\alpha,\kappa$ & Rescue cost parameters: $\Cost(x)=\alpha x+\frac{\kappa}{2}x^{2}$. \\
$\lambda_T(\theta)$ & Effective weight on $T$ in screening: $\omega_T$ (threshold baseline). \\
$U_L$ & Local government utility (interim). \\
$U_P$ & Province’s (negative) expected cost. \\
$V(\theta)$ & Truthful utility $U_L(\theta,\theta)$. \\
$\underline U$ & Reservation utility (IR constraint). \\
$\theta^{\min},\theta^{\dagger}$ & Lower/upper cutoffs for the optimal cap. \\
$b_{\max}$ & Interior bailout cap level. \\
$\pi$ & Ex-ante default-coverage probability (descriptive). \\
$\delta$ & Ex-ante default probability. \\
$\phi$ & Welfare loss to residents under unresolved gap. \\
$\bar b$ & Statutory cap on per-period bailout. \\
$\eta$ & Audit/report noise. \\
$B(q),\;\Gamma(I)$ & Utility/benefit from service level $q$ and investment $I$. \\
$\chi$ & Convex loss parameter in discretionary rescue (Appendix~\ref{subsec:disc}). \\
\bottomrule
\end{tabular}
\end{table}

\section{Discretionary rescue at $t=2$ and backward induction}\label{subsec:disc}

\paragraph{Provincial problem at $t=2$ (no commitment).}
Suppose the Province cannot commit to $\beta$ at $t=0$ and instead chooses 
a payout $x$ at $t=2$ after observing the noisy gap signal $\hat G$. 
For tractability, let the loss from an unresolved residual gap $(\hat G-x)_{+}$ be convex:
\[
  L\bigl((\hat G-x)_{+}\bigr) \;=\; \frac{\chi}{2}\,(\hat G-x)_{+}^{2}, 
  \qquad \chi>0.
\]
The Province solves
\[
  \min_{\,0\le x \le \min\{\bar b,\,\hat G\}}\;
      \alpha x \;+\; \frac{\kappa}{2}x^{2}
      \;+\; L\bigl((\hat G-x)_{+}\bigr).
\]
When $0<x<\min\{\bar b,\,\hat G\}$ the FOC is
$\alpha+\kappa x - \chi(\hat G-x) = 0$, hence
\[
  \beta^{\text{disc}}(\hat G)
   \;=\; \left[\, \frac{\chi\,\hat G - \alpha}{\kappa+\chi} \,\right]_{[\,0,\,\bar b\,]}.
\]
Therefore $\beta^{\text{disc}}$ is \emph{threshold--linear--cap} in $\hat G$.

\paragraph{Backward induction to $t=1$.}
Municipalities at $t=1$ anticipate $\beta^{\text{disc}}(\cdot)$ and choose effort accordingly.
On the interior linear branch where $\beta^{\text{disc}}{}'(\hat G)=\chi/(\kappa+\chi)$, 
the default-probability component in \eqref{eq:FOC_e_correct} is scaled by $\kappa/(\kappa+\chi)$
and there is an additional marginal-rescue term $-\omega_b\,\E[\beta^{\text{disc}}{}'(\hat G)\mathbf1\{\beta^{\text{disc}}<b\}]$.

\paragraph{Discussion.}
This discretionary benchmark microfounds a threshold--linear--cap rule at $t=2$ and shows how the slope filters into 
\eqref{eq:FOC_e_correct}, strengthening the soft budget moral-hazard channel. 
Our commitment baseline avoids time inconsistency by fixing $\beta$ at $t=0$; 
the discretion variant is useful as a robustness check.

\section{Variable marginal utility of bailouts}\label{app:omega-variable}

When the marginal utility of a realized bailout, $\omega_b(\theta)$, varies
across jurisdictions—for instance because political pressure is stronger for
small communities—the first-order condition for the optimal cap becomes
\[
  \omega_b(\theta)\;=\;\alpha\;+\;\kappa\,b^{\ast}(\theta),
\]
so that the linear segment in~\eqref{eq:b-star} reads
\[
  b^{\ast}(\theta)
  \;=\;\frac{\omega_b(\theta)-\alpha}{\kappa},
  \qquad
  0\;\le\;b^{\ast}(\theta)\;\le\;\bar b.
\]
\textit{Implication.} As long as $\omega_b(\theta)$ is weakly increasing in
$\theta$, the cap schedule remains monotone and retains the
\emph{threshold–linear–cap} geometry under the discretionary benchmark.  The slope may now vary with type; for
empirical calibration one needs an estimate of $\omega_b(\theta)$, e.g.\ from
survey weights or past voting patterns.

\section{Single-crossing and monotonicity details}\label{app:sc-proof}
With $U_L(\hat\theta,\theta)=x(\hat\theta;\theta)+K(\theta)$ and $\partial^2 U_L/\partial\theta\,\partial x\ge0$,
the Spence--Mirrlees single-crossing property implies standard IC inequalities:
for any $\theta>\hat\theta$,
\[
  \bigl[U_L(\theta,\theta)-U_L(\hat\theta,\theta)\bigr]
  \;\ge\;
  \bigl[U_L(\theta,\hat\theta)-U_L(\hat\theta,\hat\theta)\bigr].
\]
Since $K$ cancels, this reduces to
$x(\theta;\theta)-x(\hat\theta;\theta)\ge x(\theta;\hat\theta)-x(\hat\theta;\hat\theta)$.
By letting reports be truthful on the RHS, we get $x(\theta)\ge x(\hat\theta)$, hence monotonicity.
When the virtual term $\bigl(\gamma\omega_b/\lambda_T(\theta)\bigr)\haz(\theta)$ fails to be increasing,
standard ironing (\`a la Myerson) delivers a nondecreasing ironed index.

\section{Regularity for differentiation under the expectation}\label{app:regularity-diff}
We justify the steps leading to \eqref{eq:FOC_e_correct} and Lemma~\ref{lem:margprob}.

\begin{lemma}[Differentiation under the expectation with threshold rules]\label{lem:milgrom-segal}
With continuous audit noise and threshold (piecewise $C^1$) $\beta$, boundary sets are Lebesgue-null, so dominated convergence applies and no cap-slack assumption is required for the differentiation steps below.
For any integrand $\varphi(\hat G,e)$ dominated by an integrable envelope and piecewise $C^1$ in $e$, the map $e\mapsto \E[\varphi(\hat G,e)]$ is a.e.\ differentiable and 
\[
\frac{\dd}{\dd e}\,\E[\varphi(\hat G,e)]=\E[\partial_e \varphi(\hat G,e)].
\]
Moreover, for events of the form $\{\hat G-\beta(\hat G)>0\}$, the boundary set has Lebesgue measure zero, so boundary contributions vanish under dominated convergence.
\end{lemma}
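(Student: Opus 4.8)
The plan is to verify the two hypotheses of the measurable envelope theorem of \citet{MilgromSegal2002} --- a.e.\ differentiability of the integrand in $e$ and an integrable dominating envelope --- and then invoke dominated convergence to interchange $\dd/\dd e$ with $\E$. I would organize the argument around three facts that follow from the maintained assumptions. First I would establish the null-boundary claim. Under Assumption~\ref{ass:regularity} the noise $\eta$ has a continuous density, so conditional on $G$ the law of $\hat G=G+\eta$ is absolutely continuous; marginalizing, $\hat G$ has no atoms. Because $\beta$ is piecewise constant (Assumption~\ref{ass:beta-threshold}), it takes finitely many values $c_1,\dots,c_k$ with finitely many jump points, so the event boundary $\{\hat G-\beta(\hat G)=0\}\subseteq\bigcup_j\{\hat G=c_j\}$ is a finite union of level sets, each of $\hat G$-probability zero. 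Hence every boundary set in question is both Lebesgue-null and $\PP_\theta$-null.

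Second, I would verify pointwise a.e.\ differentiability in $e$. Fix a realization of $(\varepsilon,\eta)$. Since $G$ is $C^1$ in $e$ through $-R(e,\theta)$ with $R'_e$ continuous, $e\mapsto\hat G(e)$ is $C^1$; composing with the piecewise-$C^1$ map $\varphi$, the only points of nondifferentiability of $e\mapsto\varphi(\hat G(e),e)$ are those $e$ at which $\hat G(e)$ crosses a threshold of $\beta$. For each fixed noise draw these are isolated, and combining the null-boundary fact with a Fubini/Tonelli exchange shows that the set of $(\eta,e)$ pairs at which differentiability fails has product measure zero. Where the derivative exists the chain rule gives $\partial_e\varphi=\partial_{\hat G}\varphi\,\partial_e\hat G+\partial_2\varphi$, and since $\beta'(\hat G)=0$ a.e.\ the terms carrying $\beta'$ drop out --- this is precisely why the marginal-rescue term vanishes for threshold rules in \eqref{eq:FOC_e_correct}.

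Third, I would supply the dominating function. By hypothesis $\varphi$ is dominated by an integrable envelope and is piecewise $C^1$ in $e$; on each smooth piece the mean-value inequality bounds the difference quotient $|h^{-1}[\varphi(\hat G(e+h),e+h)-\varphi(\hat G(e),e)]|$ by the supremum of $|\partial_e\varphi|$ over the segment, which --- using continuity of $R'_e$ and of $f_\eta$ together with the envelope --- is bounded uniformly for $h$ in a compact neighborhood of $0$ by an integrable function of the noise. Dominated convergence then yields $\dd/\dd e\,\E[\varphi]=\E[\partial_e\varphi]$, and the same argument applied to the indicator $\mathbf 1\{\hat G-\beta(\hat G)>0\}$ (a.e.\ continuous by the first step) shows that differentiating across the event boundary contributes nothing beyond the a.e.\ interior derivative.

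The main obstacle is the second step: the composition $\varphi(\hat G(e),e)$ can be nondifferentiable in $e$ exactly where $\hat G(e)$ crosses a threshold of $\beta$, and one must argue that these crossing-sets, though nonempty for each fixed noise realization, aggregate to a product-null set after integrating against the continuous noise density. This is where the continuous-density hypothesis on $\eta$ and the finiteness of the threshold set do the real work, via the Fubini/Tonelli exchange; the domination in the third step is then routine given the integrable envelope.
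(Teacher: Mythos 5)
Your first two steps are sound and, in fact, more detailed than the paper's own justification: the paper simply asserts the Leibniz/dominated-convergence interchange under its regularity conditions (i)--(iii) and notes that $\{\hat G-\beta(\hat G)=0\}$ is Lebesgue-null because $\eta$ has a density and $\beta$ is a.e.\ differentiable with slope bounded below $1$, whereas you give the explicit level-set containment $\{\hat G-\beta(\hat G)=0\}\subseteq\bigcup_j\{\hat G=c_j\}$ for threshold rules and a genuine Fubini/Tonelli argument for a.e.\ differentiability of the composition $e\mapsto\varphi(\hat G(e),e)$. Your mean-value/domination step is also correct, but only for integrands that are \emph{continuous} and piecewise $C^1$ in $e$ (such as the payout $\mathbf 1\{\hat G>0\}\min\{\beta(\hat G),b,\hat G\}$, which is continuous because $\beta(0)=0$): within a smooth piece the difference quotient is bounded by $\sup|\partial_e\varphi|$ and kinks lie on a null set.

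The gap is your final claim, where you apply ``the same argument'' to the indicator $\mathbf 1\{\hat G-\beta(\hat G)>0\}$ and conclude that differentiation contributes nothing beyond the a.e.\ interior derivative. For fixed noise, $e\mapsto\mathbf 1\{H(e)>0\}$ with $H(e)=\hat G(e)-\beta(\hat G(e))$ is a step function whose a.e.\ derivative in $e$ is zero, so your argument would yield $\frac{\dd}{\dd e}\,\E[\mathbf 1\{H>0\}]=\E[0]=0$. That contradicts Lemma~\ref{lem:margprob} and Step~2 of the paper's proof of \eqref{eq:FOC_e_correct}, which show this derivative equals $-R'_e(e,\theta)\,\E\!\left[f_\eta\!\bigl(\hat G-\beta(\hat G)\bigr)\bigl(1-\beta'(\hat G)\bigr)\right]$, generically nonzero. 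Dominated convergence genuinely fails here: if a crossing occurs between $e$ and $e+h$, the difference quotient is of order $1/|h|$, so no integrable envelope independent of $h$ exists, and the mean-value bound on smooth pieces cannot control it. The paper treats the indicator by a different device---mollifying the Heaviside function ($s_n'\to\delta_0$) and passing to the limit, which \emph{produces} the boundary term $\E[\delta_0(H)\,\partial_e H]$ rather than annihilating it; the null-boundary fact from your first step enters there only to guarantee that $H$ has a continuous distribution, so the delta evaluation against $f_\eta$ is well defined. The lemma's ``moreover'' clause should therefore be read as saying that switch and boundary sets carry no atoms (so continuous integrands pick up no extra contributions), not as licensing the naive interchange for discontinuous integrands; your proof needs to separate these two cases exactly as the paper does.
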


\begin{lemma}[Boundary-effect bound]\label{lem:boundary-bound}
Let $\eta$ have a continuous density with integrable tails and let $\beta$ be threshold (piecewise constant). For any integrand $\varphi(\hat G,e)$ dominated by an integrable envelope and any locally bounded $R'_e$, there exists a constant $C<\infty$ (depending only on the envelope and $\sup|R'_e|$ on compact sets) such that
\[
\Bigl|\tfrac{\dd}{\dd e}\E[\varphi(\hat G,e)] \;-\; \tfrac{\dd}{\dd e}\E\bigl[\varphi(\hat G,e)\,\mathbf 1\{\beta(\hat G)<b(\hat\theta)\}\bigr] \Bigr|
\;\le\; C\,\PP_\theta\!\bigl[\beta(\hat G)\ge b(\hat\theta)\bigr].
\]
In particular, the boundary contribution vanishes whenever the cap-binding tail probability is zero and is uniformly dominated by that tail probability otherwise.
\end{lemma}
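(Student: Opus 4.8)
The plan is to reduce the two-sided difference to a single derivative taken over the cap-binding set and then control that derivative with the envelope and the location-shift structure of $\hat G$. By linearity of expectation and the identity $\mathbf 1\{\beta(\hat G)<b(\hat\theta)\}=1-\mathbf 1\{\beta(\hat G)\ge b(\hat\theta)\}$, the left-hand side collapses to $\bigl|\tfrac{\dd}{\dd e}\E[\varphi(\hat G,e)\,\mathbf 1\{\beta(\hat G)\ge b(\hat\theta)\}]\bigr|$, so it suffices to bound the derivative of the binding-set contribution in isolation.

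Next I would exploit the location-shift form $\hat G=h(e)+\zeta$, where $h'(e)=-R'_e(e,\theta)$ carries the $e$-dependence of the pre-payout gap in \eqref{eq:gap_pre} and $\zeta$ (the combined shock $\varepsilon+\eta$, conditionally on $\theta$ and the minimized block $q,I,D$) inherits a continuous density $f_\zeta$ with integrable tails from the continuity of $f_\eta$. Because $\beta$ is nondecreasing and piecewise constant (Assumptions~\ref{ass:regularity} and~\ref{ass:beta-threshold}), the binding set $\{\beta(\hat G)\ge b(\hat\theta)\}$ is a half-line $\{\hat G\ge u^\ast\}=\{\zeta\ge u^\ast-h(e)\}$ whose threshold $u^\ast(\hat\theta)$ is a jump point of $\beta$, independent of $e$ in $\hat G$-coordinates. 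Differentiating under the expectation — justified by Lemma~\ref{lem:milgrom-segal}, since the jump locus of $\beta$ is Lebesgue-null under continuous noise — and applying Leibniz's rule to the $e$-dependent lower limit in $\zeta$-coordinates splits the derivative into an \emph{interior} term $\int_{u^\ast-h(e)}^{\infty}[\partial_{\hat G}\varphi\cdot h'(e)+\partial_e\varphi]\,f_\zeta$ and a \emph{boundary} term $h'(e)\,\varphi(u^\ast,e)\,f_\zeta(u^\ast-h(e))$.

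The interior term is the easy half: with the envelope bounds $|\partial_{\hat G}\varphi|,|\partial_e\varphi|\le M$ and $\sup|R'_e|\le M_R$ on the relevant compact, its modulus is at most $(M M_R+M)\int_{u^\ast-h(e)}^\infty f_\zeta=C_1\,\PP_\theta[\beta(\hat G)\ge b(\hat\theta)]$, already of the claimed form. The main obstacle is the boundary term, which equals $\varphi(u^\ast,e)\cdot\tfrac{\dd}{\dd e}\PP_\theta[\beta(\hat G)\ge b(\hat\theta)]$ and is thus governed by the density at the threshold rather than by the tail mass itself; these are not comparable by a universal constant for arbitrary noise. To absorb it into $C\,\PP_\theta[\beta(\hat G)\ge b(\hat\theta)]$ I would invoke the bounded-tail reading of Assumption~\ref{ass:regularity} as a uniformly bounded hazard rate $f_\zeta/\barF_\zeta\le H$, giving $f_\zeta(u^\ast-h(e))\le H\,\barF_\zeta(u^\ast-h(e))=H\,\PP_\theta[\text{bind}]$ and hence $|\text{boundary}|\le M M_R H\,\PP_\theta[\text{bind}]$; taking $C=C_1+M M_R H$ closes the estimate. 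The ``in particular'' clause then follows at once: when $\PP_\theta[\beta(\hat G)\ge b(\hat\theta)]=0$ the half-line carries no mass, so both the interior integral and the boundary density vanish and the two derivatives coincide. I expect this boundary/hazard control to be exactly the delicate point — for noise with unbounded hazard (e.g.\ Gaussian tails) one must instead restrict to the type region where the cap-binding tail is small, as in the ``rare cap binding'' convenience of Remark~\ref{rem:cap-slack-optional}, or retain an explicit density-at-threshold term in the bound.
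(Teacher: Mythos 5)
The paper never actually proves Lemma~\ref{lem:boundary-bound}: it is stated in Appendix~\ref{app:regularity-diff} with only the surrounding heuristic paragraphs (dominated convergence, Lebesgue-null jump loci of threshold rules) as support, and the proofs in Appendix~\ref{app:proofs} cite it as a black box. So your write-up is not an alternative to the paper's argument --- it would \emph{be} the argument. Its mechanics are sound: reducing the left-hand side to $\bigl|\tfrac{\dd}{\dd e}\E\bigl[\varphi(\hat G,e)\,\mathbf 1\{\beta(\hat G)\ge b(\hat\theta)\}\bigr]\bigr|$, using monotone piecewise-constant $\beta$ to write the binding set as a half-line $\{\hat G\ge u^\ast\}$, and Leibniz-splitting into an interior term (correctly bounded by a constant times $\PP_\theta\bigl[\beta(\hat G)\ge b(\hat\theta)\bigr]$) plus a boundary term.

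Where you deserve particular credit is in flagging the boundary term as the genuine obstruction, because it exposes a defect in the lemma as stated. Taking $\varphi\equiv 1$ makes it concrete: the left-hand side is then exactly $R'_e(e,\theta)$ times the density of $\hat G$ at $u^\ast$, so the advertised bound forces that density to be dominated by $C$ times the tail mass $\PP_\theta[\hat G\ge u^\ast]$ with $C$ depending only on the envelope and $\sup|R'_e|$ --- i.e.\ a uniformly bounded hazard rate for the combined noise. Assumption~\ref{ass:regularity}'s ``bounded tails'' does not deliver this: Gaussian noise has unbounded hazard, and even uniform noise fails near the edge of its support. Your bounded-hazard hypothesis is the minimal patch under which the lemma holds with the claimed uniform constant; the honest alternatives are the ones you name --- restrict to a regime where the threshold, in noise coordinates, ranges over a compact set on which the noise survivor function is bounded away from zero (in which case $C$ also depends on the noise distribution over that compact, not merely the envelope and $\sup|R'_e|$), or retain an explicit density-at-threshold term in the bound, in the spirit of Remark~\ref{rem:cap-slack-optional}. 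One of these qualifications should be written into the lemma; as stated, its ``uniformly dominated by that tail probability'' clause is not justified and, absent a hazard-rate condition, not correct.
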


\paragraph{Dominated convergence / Leibniz rule.}
Assume: (i) $\eta$ has a continuous density $f_\eta$ with bounded tails; (ii) $\beta$ is piecewise $C^1$ with slope in $[0,1)$ and bounded image; (iii) $R'_e(e,\theta)$ is continuous and locally bounded uniformly in $e$ on compact sets. Then, for any integrable function $g(\hat G,e)$ that is piecewise $C^1$ in $e$ and dominated by an integrable envelope, we may differentiate inside the expectation by dominated convergence / Leibniz’s rule:
\[
\frac{\partial}{\partial e}\,\E\bigl[g(\hat G,e)\bigr]
=\E\Bigl[\frac{\partial}{\partial e}g(\hat G,e)\Bigr].
\]

\paragraph{Indicators and boundary sets.}
For events of the form $\{\hat G-\beta(\hat G)>0\}$, the boundary $\{\hat G-\beta(\hat G)=0\}$ has Lebesgue measure zero because $\eta$ has a density and $\beta$ is a.e.\ differentiable with bounded slope; hence the derivative of the indicator contributes no boundary term.
On threshold rules, $\beta'(\hat G)=0$ a.e., so the marginal-rescue term vanishes, yielding the expressions stated in Lemma~\ref{lem:margprob} and \eqref{eq:FOC_e_correct}.

\section{Technical Lemmas and Proofs}\label{app:proofs}

\noindent\textit{Note (implementable payout).} Throughout, the realized payout is
\[
p(\hat G,\hat\theta)=\mathbf 1\{\hat G>0\}\min\{\beta(\hat G),b(\hat\theta),\hat G\}.
\]
On the cap–slack and positive–signal set where $\beta(\hat G)<b(\hat\theta)$, all derivatives below coincide with those under $p=\beta(\hat G)$.
When the $\min$ picks $\hat G$, boundary sets have Lebesgue measure zero under continuous noise, so the derivative contributions vanish a.e.

\begin{lemma}[Effort independence from report]\label{lem:e-indep}
Under Assumptions~\ref{ass:primitives}, \ref{ass:regularity} and \ref{ass:beta-threshold}, the interior first-order condition \eqref{eq:FOC_e_correct} can be rewritten
\[
R'_e\!\bigl(e^{\ast}(\theta),\theta\bigr)\bigl\{1+\varphi\,\Lambda\bigr\}=\phi,
\qquad
\Lambda=\E\bigl[f_\eta(\hat G-\beta(\hat G))\bigr].
\]
All terms on the right depend only on the \emph{true} type~$\theta$; hence
$e^\star(\theta)$ is independent of the reported~$\hat\theta$ up to boundary-density terms on the cap-binding tail.
\end{lemma}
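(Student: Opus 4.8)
The plan is to begin from the interior effort FOC \eqref{eq:FOC_e_correct} and collapse it under the threshold restriction on $\beta$, then read off report-independence directly from which objects survive the simplification. The key observation is that the report $\hat\theta$ enters \eqref{eq:FOC_e_correct} in exactly one place: the cap $b(\hat\theta)$ inside the indicator $\mathbf1\{\beta(\hat G)<b(\hat\theta)\}$ of the marginal-rescue term $-\omega_b\,\E[\beta'(\hat G)\,\mathbf1\{\beta(\hat G)<b(\hat\theta)\}]$. The strategy is therefore to show this single term vanishes, after which no $\hat\theta$-dependence remains on the right-hand side.

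First I would invoke Assumption~\ref{ass:beta-threshold}: $\beta$ is nondecreasing and piecewise constant, so $\beta'(\hat G)=0$ except on the (at most countable) set of jump points of $\beta$. Under Assumption~\ref{ass:regularity} the audit noise $\eta$ has a continuous density, so $\hat G=G+\eta$ has a continuous conditional density given $\theta$; hence the jump set is $\hat G$-null and contributes nothing to any expectation taken against the law of $\hat G$. Consequently $\E[\beta'(\hat G)\,\mathbf1\{\beta(\hat G)<b(\hat\theta)\}\mid\theta]=0$, which eliminates the marginal-rescue term, while simultaneously $1-\beta'(\hat G)=1$ a.e.\ inside the default-loss expectation. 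The FOC therefore collapses to
\[
R'_e\!\bigl(e^{\ast}(\theta),\theta\bigr)\bigl\{1+\varphi\,\Lambda\bigr\}=\phi,\qquad \Lambda=\E\bigl[f_\eta(\hat G-\beta(\hat G))\mid\theta\bigr],
\]
which is the stated reduced form.

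Next I would verify report-independence by inspection of the surviving terms. The factor $R'_e(e^{\ast}(\theta),\theta)$ depends only on $(e,\theta)$; the constant $\phi$ and the weight $\varphi$ are primitives; and $\Lambda$ is an expectation taken under the conditional law of $\hat G$ given the \emph{true} type $\theta$, of an integrand $f_\eta(\hat G-\beta(\hat G))$ that no longer references $\hat\theta$. Thus the defining equation for $e^{\ast}$ is indexed by $\theta$ alone, so its solution $e^{\ast}(\theta)$ is independent of the report, which is exactly the conclusion of the lemma.

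The delicate step, and the reason for the qualifier ``up to boundary-density terms on the cap-binding tail,'' is the interchange of differentiation and expectation at the jump points of $\beta$ and on the cap-binding set $\{\beta(\hat G)\ge b(\hat\theta)\}$, where the payout switches to the report-dependent constant $b(\hat\theta)$ and the $\min$ in $p$ selects a binding branch. I would discharge this using Lemma~\ref{lem:milgrom-segal} for differentiation under the expectation with threshold rules, and Lemma~\ref{lem:boundary-bound} to bound the omitted cap-binding contribution by $C\cdot\PP_\theta[\beta(\hat G)\ge b(\hat\theta)]$; this bound is zero when the cap-binding tail has probability zero and is uniformly small otherwise, which is precisely the sense in which $e^{\ast}(\theta)$ is report-independent ``up to boundary-density terms.'' The main obstacle is thus not the algebra but the measure-theoretic justification that the jump set of $\beta$ and the cap boundary are negligible under continuous noise; once Lemmas~\ref{lem:milgrom-segal} and~\ref{lem:boundary-bound} are in hand, the collapse and the report-independence conclusion follow immediately.
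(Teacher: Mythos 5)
Your proposal is correct and follows essentially the same route as the paper: the paper's own justification is precisely that, under the threshold restriction (Assumption~\ref{ass:beta-threshold}) with continuous audit noise, $\beta'(\hat G)=0$ almost everywhere, so the marginal-rescue term in \eqref{eq:FOC_e_correct} vanishes and $1-\beta'(\hat G)=1$ a.e., leaving the reduced FOC whose terms depend only on the true type, with cap-binding contributions controlled by Lemma~\ref{lem:boundary-bound}. Your invocation of Lemma~\ref{lem:milgrom-segal} for the interchange of differentiation and expectation, and of Lemma~\ref{lem:boundary-bound} for the ``up to boundary-density terms'' qualifier, matches the paper's treatment exactly.
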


\begin{lemma}[Marginal default probability on the signal branch]\label{lem:margprob}
Under Assumptions~\ref{ass:primitives}--\ref{ass:regularity}, let $\delta=\PP_\theta\!\bigl[p(\hat G,\hat\theta)<G\bigr]$. On the set where $\beta(\hat G)<b(\hat\theta)$ (cap slack),
\[
  \frac{\partial \delta}{\partial e}
  \;=\;
  -\,R'_e(e,\theta)\,\E\!\left[f_\eta\!\bigl(\hat{G}-\beta(\hat{G})\bigr)\,\bigl(1-\beta'(\hat G)\bigr)\right],
\]
and for \emph{threshold} rules (piecewise constant $\beta$), $\beta'(\hat G)=0$ a.e., so
\[
\frac{\partial \delta}{\partial e}=-R'_e(e,\theta)\,\E\!\left[f_\eta\!\bigl(\hat{G}-\beta(\hat{G})\bigr)\right].
\]
\end{lemma}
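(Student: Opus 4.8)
The plan is to work entirely on the cap-slack, positive-signal set $\{\hat G>0,\ \beta(\hat G)<b(\hat\theta)\}$, where the implementable payout collapses to $p=\beta(\hat G)$; by Assumption~\ref{ass:regularity} (continuous audit density $f_\eta$, bounded slope $\beta'\in[0,1)$) together with Lemma~\ref{lem:boundary-bound}, the complementary cap-binding set is dominated by the tail probability $\PP_\theta[\beta(\hat G)\ge b(\hat\theta)]$ and contributes no mass to the derivative almost everywhere, so the restriction in the statement is inherited verbatim. The first substantive step is to rewrite the default event in \emph{signal space}: since $G=\hat G-\eta$ from \eqref{eq:gap_pre}, the residual-gap condition $p<G$ becomes $\beta(\hat G)<\hat G-\eta$, i.e.\ $\eta<\hat G-\beta(\hat G)$. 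Conditioning on the realized signal and integrating out the audit noise then yields the closed form $\delta=\E_{\hat G}\!\big[F_\eta(\hat G-\beta(\hat G))\big]$, where $F_\eta$ is the c.d.f.\ of $\eta$; this representation is the right object because it exposes the default boundary as a smooth level set of $\hat G-\beta(\hat G)$ rather than a discontinuity.

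Next I would differentiate through effort. Effort enters only through the gap, and by \eqref{eq:gap_pre} it shifts the gap — hence the signal — one-for-one: $\partial\hat G/\partial e=\partial G/\partial e=-R'_e(e,\theta)$, since $\eta$ is independent of $e$. Applying the chain rule to the integrand gives $\partial_e F_\eta(\hat G-\beta(\hat G))=f_\eta(\hat G-\beta(\hat G))\,(1-\beta'(\hat G))\,\partial_e\hat G$, where the factor $1-\beta'(\hat G)$ records that a unit of effort both lowers the true gap and, through $\beta'$, partially lowers the payout on the same realization. Because $R'_e(e,\theta)$ is signal-independent it factors out of the expectation, so interchanging differentiation and expectation delivers $\partial\delta/\partial e=-R'_e(e,\theta)\,\E[f_\eta(\hat G-\beta(\hat G))(1-\beta'(\hat G))]$, the first claim. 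The threshold specialization is then immediate: under Assumption~\ref{ass:beta-threshold} the rule $\beta$ is piecewise constant, so $\beta'(\hat G)=0$ a.e.\ and the factor $1-\beta'(\hat G)$ collapses to $1$.

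The hard part is the boundary and interchange bookkeeping, precisely because the default indicator is discontinuous and the threshold rule jumps. Two points need care. First, the chain-rule factor must be the \emph{only} contribution: I must verify that differentiating the signal-space default probability produces no additional boundary term from the level set $\{\hat G-\beta(\hat G)=\text{const}\}$ or from the jump points of $\beta$. Continuity of $f_\eta$ renders these sets $f_\eta$-null, so by the dominated-convergence/Leibniz argument of Lemma~\ref{lem:milgrom-segal} they drop out and the interchange is legitimate. Second, for the threshold case one must check that the jumps of $\beta$ do not inject Dirac-type terms into $\partial_e\delta$: because the smoothing is supplied by the continuous audit density $f_\eta$ and not by $\beta$ itself, each jump of $\beta$ maps to a set of signals of $f_\eta$-measure zero, which is exactly what justifies using the a.e.\ identity $\beta'=0$ \emph{inside} the expectation rather than at an isolated crossing point. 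Assembling these observations reduces $\partial_e\delta$ to the stated expectation, with the cap-slack restriction on the signal branch carried through from the opening reduction.
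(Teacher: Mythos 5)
Your opening reduction (cap--slack branch where $p=\beta(\hat G)$, null boundary sets controlled via Lemma~\ref{lem:boundary-bound}, and the threshold specialization $\beta'=0$ a.e.) matches the paper. The genuine gap is your Step~3: the identity $\delta=\E_{\hat G}\bigl[F_\eta\bigl(\hat G-\beta(\hat G)\bigr)\bigr]$ requires $\eta$ to be independent of $\hat G$, and it is not --- $\hat G=G+\eta$, so conditional on the realized signal the law of $\eta$ is \emph{not} $F_\eta$ (if the fiscal shock $\varepsilon$ were degenerate, conditioning on $\hat G$ would pin $\eta$ down exactly). A degenerate check exposes the failure: take $\beta\equiv 0$; then $\{p<G\}=\{G>0\}$ and $\delta=\PP_\theta[G>0]$ involves no audit noise at all, whereas your identity returns $\E[F_\eta(\hat G)]$. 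The legitimate conditioning is on $G$ (equivalently on $\varepsilon$): since $\beta$ is nondecreasing, $\{\eta:\beta(G+\eta)<G\}=\{\eta<t(G)\}$ with $t(G)=\beta^{-1}(G)-G$ a generalized inverse, so $\delta=\E_G\bigl[F_\eta(t(G))\bigr]$; differentiating this in $e$ produces the chain-rule factor $t'(G)=\bigl(1-\beta'\bigr)/\beta'$ evaluated at the inverse point, not the factor $\bigl(1-\beta'(\hat G)\bigr)$ averaged under the signal distribution. So the corrected version of your own computation does not deliver the stated formula; you reach it only through the independence slip.

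The paper's proof takes a different route precisely at this point. It never differentiates the literal event $\{\eta<\hat G-\beta(\hat G)\}$; instead it identifies the cap--slack default event with $\{H>0\}$, $H=\hat G-\beta(\hat G)$ --- a default test carried out in signal space, the same identification used in Step~2 of the proof of Eq.~\eqref{eq:FOC_e_correct} --- and then differentiates $\E[\mathbf 1\{H>0\}]$ by the mollifier/Dirac argument, with effort entering only through the location shift $\partial_e H=-R'_e(e,\theta)\bigl(1-\beta'(\hat G)\bigr)$ and the density at the kink read off from $f_\eta$. There the factor $\bigl(1-\beta'\bigr)$ arises as $\partial_e H$, not as an inner chain-rule factor of $F_\eta$. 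To repair your argument you must either (i) adopt that event identification at the outset and then run your cdf-smoothing over the law of $\hat G$ itself, which is legitimate and reproduces the paper's expression, or (ii) keep the literal true-gap event, condition on $G$ rather than on $\hat G$, and accept that the resulting derivative differs from the lemma. As written, the proposal fails exactly where the discontinuous indicator is smoothed.
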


\begin{proof}[Proof of Lemma~\ref{lem:lambdaT}]
Recall $\hat G=G+\eta$ and $\partial_T \hat G=\partial_T G=-1$. Write
\[
p(\hat G,\hat\theta)=\mathbf 1\{\hat G>0\}\min\{\beta(\hat G),b(\hat\theta),\hat G\}.
\]
Since $\beta(0)=0$ and $\beta'\in[0,1)$, for $\hat G\ge 0$ we have $\beta(\hat G)\le \hat G$. Hence on $\{\hat G>0\}$ and cap–slack $\{\beta(\hat G)<b(\hat\theta)\}$, the minimum is $\beta(\hat G)$ and
\[
\partial_T p = \beta'(\hat G)\,\partial_T \hat G = -\,\beta'(\hat G).
\]
On the cap-binding set $\{\beta(\hat G)\ge b(\hat\theta)\}$, $p=b(\hat\theta)$ so $\partial_T p=0$. Therefore,
\[
\frac{\partial}{\partial T}\E[p(\hat G,\hat\theta)\mid\theta]
= -\,\E\!\left[\beta'(\hat G)\,\mathbf 1\{\beta(\hat G)<b(\hat\theta)\}\,\middle|\,\theta\right]+\text{boundary terms}.
\]
The boundary terms arise only when the identity of the minimizer $\operatorname*{arg\,min}\{\beta(\hat G),\, b(\hat\theta),\, \hat G\}$ changes; 
since $\eta$ has a continuous density and $\beta$ is a.e.\ differentiable with slope $<1$, 
those switch sets have Lebesgue measure zero and their contribution vanishes under dominated convergence. Hence $\lambda_T(\theta)=\omega_T-\omega_b\,\partial_T\E[p|\theta]$ reduces to the stated expressions; in particular, for threshold $\beta$ we obtain $\lambda_T(\theta)\equiv \omega_T$. \qedhere
\end{proof}

\begin{lemma}[Monotonicity of the allocation index]\label{lem:mono-fixed}
Under IC and Assumption~\ref{ass:sc}, the implemented allocation index
\[
  x(\theta)\;=\;\lambda_T(\theta)\,T(\theta)+\omega_b(\theta)\,\\tilde b(\theta;\theta)
\]
is weakly increasing in $\theta$. When ironing is needed (IFR with nonmonotone virtual term), the ironed allocation preserves nondecreasingness.
\end{lemma}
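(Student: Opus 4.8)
The plan is to establish weak monotonicity of the truthful index $x(\theta)=x(\theta;\theta)$ directly from the downward incentive constraint together with the MLRP-driven type-monotonicity of $\tilde b$, and then to dispose of the nonmonotone-virtual-weight case by ironing. The organizing device is the two-term decomposition, valid for any $\theta>\theta'$,
\[
x(\theta)-x(\theta')=\bigl[x(\theta;\theta)-x(\theta';\theta)\bigr]+\bigl[x(\theta';\theta)-x(\theta';\theta')\bigr],
\]
which isolates a \emph{report effect} at the true type $\theta$ and a \emph{type effect} at the fixed report $\theta'$; I would sign each bracket separately rather than appeal to the two-type ``revealed-preference sandwich,'' which in this interdependent-values setting merely reproduces Assumption~\ref{ass:sc} and is uninformative about the diagonal.

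First I would sign the report effect. Incentive compatibility for the true type $\theta$ (no profitable downward misreport to $\theta'$) reads $U_L(\theta,\theta)\ge U_L(\theta',\theta)$; since $U_L(\hat\theta,\theta)=x(\hat\theta;\theta)+K(\theta)$ by \eqref{eq:UL-reduced} and $K(\theta)$ is report-independent, this is exactly $x(\theta;\theta)\ge x(\theta';\theta)$, so the first bracket is nonnegative. For the type effect I would invoke Assumption~\ref{ass:mlrp}: at a fixed report $\theta'$ the true type enters $x(\theta';\cdot)$ only through the conditional expectation $\tilde b(\theta';\cdot)$ (and, if they are nonconstant, through the weights), and $\partial_\theta\tilde b(\hat\theta;\theta)\ge0$ for nondecreasing $\beta$, as recorded in Assumption~\ref{ass:sc}. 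In the threshold baseline with constant weights $\lambda_T\equiv\omega_T$ the second bracket equals $\omega_b\bigl[\tilde b(\theta';\theta)-\tilde b(\theta';\theta')\bigr]\ge0$; when weights vary I would additionally require $\lambda_T,\omega_b$ weakly increasing (as in Appendix~\ref{app:omega-variable}) with $T\ge0$ from (LL), which keeps the bracket nonnegative. Both brackets nonnegative gives $x(\theta)\ge x(\theta')$.

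The remaining step is ironing. Under IFR (Assumption~\ref{ass:IFR}) the hazard $\haz$ is nondecreasing, so when $\omega_b/\lambda_T$ is constant the virtual weight $(\gamma\omega_b/\lambda_T)\haz(\theta)$ is nondecreasing and, by Theorem~\ref{thm:char}, $b^\ast$ and hence $x$ inherit monotonicity without modification. When type-varying weights render the virtual term nonmonotone, I would replace it by its monotone (ironed) version in the sense of Myerson, pool types on the resulting flat segments, and verify that the pooled schedule remains IC (a constant allocation over an interval is trivially implementable), respects (IR) and (LL), and does not raise the Province's objective by the standard convexity argument for the ironed virtual surplus. The ironed index is then nondecreasing by construction.

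The hard part will be the ironing step rather than the monotonicity decomposition: textbook Myerson ironing is stated for a private-values virtual surplus, whereas here the index carries an interdependent-values term through $\tilde b(\hat\theta;\theta)$, so I would need to confirm that convexifying in the type variable is compatible with the envelope representation \eqref{eq:envelope-correct} and that pooling does not reopen a profitable misreport through the type-effect channel. Signing the type effect under nonmonotone weights is the secondary difficulty, which is exactly why I impose weakly increasing weights once the baseline constancy fails.
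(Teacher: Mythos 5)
Your proposal is correct, but it takes a genuinely different route from the paper's. The paper argues by contradiction via the classic revealed-preference sandwich: for $\theta_2>\theta_1$ with (supposedly) $x(\theta_2)<x(\theta_1)$, it writes \emph{both} IC inequalities, subtracts, and appeals to Spence--Mirrlees (Assumption~\ref{ass:sc}; the same template appears in Appendix~\ref{app:sc-proof}). You instead run the one-sided chain $x(\theta;\theta)\ge x(\theta';\theta)\ge x(\theta';\theta')$: a single downward IC signs the report effect, and MLRP (Assumption~\ref{ass:mlrp}) signs the type effect because $\hat G\mapsto\min\{\beta(\hat G),b(\theta')\}$ is nondecreasing. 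Your route is arguably the tighter one in this reduced form: since $U_L(\hat\theta,\theta)=x(\hat\theta;\theta)+K(\theta)$ by \eqref{eq:UL-reduced}, the literal condition $\partial^2 U_L/\partial\theta\,\partial x\ge0$ is degenerate, and what the two-IC sandwich actually delivers is a supermodularity inequality for $x(\hat\theta;\theta)$ in report and type, which---as you point out---does not by itself sign the diagonal $\theta\mapsto x(\theta;\theta)$; the paper's phrase ``using single crossing'' is implicitly invoking exactly the type-monotonicity $\partial_\theta\tilde b(\hat\theta;\theta)\ge0$ that your second bracket makes explicit. What the paper's route buys is scope: by treating Assumption~\ref{ass:sc} as primitive it nominally covers type-varying weights with no further conditions, whereas your type-effect bracket requires the extra hypotheses (weakly increasing $\lambda_T,\omega_b$ together with $T\ge0$ and $\tilde b\ge0$) that are not part of the lemma statement---you flag this honestly, but it is a genuine narrowing relative to the stated generality, and note that it also sits in tension with Lemma~\ref{lem:mono_cap}, where the convenient case is $\lambda_T$ weakly \emph{decreasing}. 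On ironing, you and the paper are at the same level of rigor: the paper simply asserts that the ironed allocation preserves nondecreasingness, so your concerns about pooling and the interdependent-values envelope are legitimate open details rather than a divergence from what the paper actually proves.
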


\begin{lemma}[Monotonicity under IFR and caps]\label{lem:mono_cap}
If $\lambda_T$ and $\omega_b$ are locally constant and $f/\barF$ is
increasing (IFR), then the optimal cap $b^{\ast}(\theta)=\min\{\bar b,\max\{0,\tilde b(\theta)\}\}$ is weakly increasing,
where $\tilde b(\theta)=\frac{\gamma\omega_b}{\kappa\lambda_T}\frac{f(\theta)}{\barF(\theta)}-\frac{\alpha}{\kappa}$.
If $\lambda_T(\theta)$ varies, a sufficient condition is that
$\lambda_T(\theta)$ is weakly decreasing and $f(\theta)/\barF(\theta)$ is
increasing; otherwise, apply standard ironing on the virtual term $\frac{\gamma\omega_b}{\lambda_T(\theta)}\frac{f(\theta)}{\barF(\theta)}$.
\end{lemma}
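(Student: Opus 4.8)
The plan is to reduce the claim to monotonicity of the \emph{interior} (unclamped) cap and then dispatch the regimes one at a time. Writing the clamp $P(x)=\min\{\bar b,\max\{0,x\}\}$, the rule in \eqref{eq:b-star} reads $b^{\ast}(\theta)=P\bigl(\tilde b(\theta)\bigr)$ with $\tilde b(\theta)=\frac{\gamma\omega_b}{\kappa}\,\frac{\haz(\theta)}{\lambda_T(\theta)}-\frac{\alpha}{\kappa}$ and $\haz(\theta)=f(\theta)/\barF(\theta)$. Since $P$ is nondecreasing, $b^{\ast}$ inherits any monotonicity of $\tilde b$; so it suffices to show that $\tilde b$---or, where that fails, an ironed surrogate---is weakly increasing. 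In the constant-weight case ($\lambda_T,\omega_b$ locally constant), $\tilde b$ is an affine transform of $\haz$ with positive slope $\gamma\omega_b/(\kappa\lambda_T)>0$; IFR (Assumption~\ref{ass:IFR}) makes $\haz$ weakly increasing, so $\tilde b$ and hence $b^{\ast}$ are weakly increasing.

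For the case of varying $\lambda_T$, I would write $\tilde b(\theta)=c\,g(\theta)h(\theta)-\alpha/\kappa$ with $c=\gamma\omega_b/\kappa>0$, $g=\haz\ge 0$ nondecreasing under IFR, and $h=1/\lambda_T>0$ nondecreasing whenever $\lambda_T$ is weakly decreasing and positive. The key elementary fact is that a product of two nonnegative nondecreasing functions is nondecreasing: for $\theta_2>\theta_1$,
\[
g(\theta_2)h(\theta_2)\;\ge\;g(\theta_1)h(\theta_2)\;\ge\;g(\theta_1)h(\theta_1),
\]
where the first step uses $g(\theta_2)\ge g(\theta_1)$ together with $h(\theta_2)\ge 0$, and the second uses $h(\theta_2)\ge h(\theta_1)$ together with $g(\theta_1)\ge 0$. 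Thus $\tilde b$ is nondecreasing and the clamp argument applies verbatim.

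When $\lambda_T(\theta)$ is nonmonotone, or rises fast enough that $\haz/\lambda_T$ loses monotonicity, the virtual term $(\gamma\omega_b/\lambda_T(\theta))\,\haz(\theta)$ is itself nonmonotone. In that case I would replace it by its Myerson-ironed version---the monotone envelope obtained from the convex minorant of the integrated virtual surplus---exactly as in Lemma~\ref{lem:mono-fixed} and Appendix~\ref{app:sc-proof}. The ironed term is nondecreasing by construction, so composing with the nondecreasing clamp $P$ yields a nondecreasing $b^{\ast}$ that coincides with the pointwise rule of Proposition~\ref{prop:opt_cap} off the flattened intervals.

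The only nonroutine step is this ironing regime: one must verify that flattening the \emph{type-dependent} weight $\gamma\omega_b/\lambda_T(\theta)$ against $\haz$ produces an \emph{implementable} (IC) and cost-optimal cap, not merely a monotone curve. The plan is to dispatch this by invoking single crossing (Assumption~\ref{ass:sc}) and the envelope representation \eqref{eq:envelope-correct} to rewrite the Province's objective as an integral against the allocation index; the standard convex-minorant argument then preserves both IC and optimality, which is exactly what licenses the lemma's appeal to ``standard ironing.'' Everything else is routine monotonicity bookkeeping.
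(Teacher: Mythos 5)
Your proposal is correct, but it is not the paper's argument. The paper's proof is self-contained on the optimality side: it fixes $\theta$, perturbs the cap by $\dd b$, and equates the marginal resource cost $(\alpha+\kappa b(\theta))\,\PP_\theta[\beta(\hat G)\ge b(\theta)]$ (from Lemma~\ref{lem:capcalc}) with the Myerson virtual benefit $\frac{\gamma\,\omega_b(\theta)}{\lambda_T(\theta)}\,\haz(\theta)\,\PP_\theta[\beta(\hat G)\ge b(\theta)]$; cancelling the common tail probability yields the pointwise KKT condition $\alpha+\kappa b(\theta)=\frac{\gamma\,\omega_b(\theta)}{\lambda_T(\theta)}\,\haz(\theta)$, from which the closed form and then monotonicity are read off. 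You instead take the closed form (equivalently \eqref{eq:b-star}, which is also embedded in the lemma statement) as an input and do pure monotonicity bookkeeping: the clamp preserves monotonicity, affine-in-hazard handles the constant-weight case, and a product-of-nonnegative-nondecreasing-functions argument handles varying $\lambda_T$. This is legitimate and non-circular, because Proposition~\ref{prop:opt_cap} derives the formula independently and invokes this lemma only for the monotonicity step; the two proofs simply divide the labor differently. What your route buys: the chain $g(\theta_2)h(\theta_2)\ge g(\theta_1)h(\theta_2)\ge g(\theta_1)h(\theta_1)$ is an explicit proof of the varying-$\lambda_T$ sufficient condition that the paper merely asserts in one line, and your observation that ironing must deliver an implementable, cost-optimal cap (not merely a monotone curve) is more careful than the paper's bare appeal to ``standard ironing.'' What the paper's route buys: deriving the KKT margin inside the lemma makes it independent of the later closed-form proposition, and it exhibits the cost--benefit calculation (with the tail probability $\PP_\theta[\beta(\hat G)\ge b(\theta)]$ cancelling) that is reused in Proposition~\ref{prop:opt_cap} and Proposition~\ref{prop:nobailout}.
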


\begin{proof}[Proof of Eq.~\eqref{eq:FOC_e_correct} (first-order condition for $e$)]
Fix $(\tau,s,\bar D,\beta,b)$ and true type $\theta$. The municipality’s $t{=}1$ objective as a function of $e$ (dropping terms independent of $e$) is
\[
\Phi(e)\;=\;\E\!\Big[R(e,\theta)-\phi e\;-\;\varphi\,\mathbf 1\!\{p(\hat G,\hat\theta)<G\}\;+\;\omega_b\,p(\hat G,\hat\theta)\Big],
\]
with $\hat G=G(e)+\eta$ and $G(e)=C(\cdot)-[R(e,\theta)+\tau+g]+\ldots$ so that $\partial_e \hat G=\partial_e G=-R'_e(e,\theta)$.

\smallskip
\noindent\textbf{Step 1 (Justifying differentiation under $\E$).}
By Assumptions~\ref{ass:regularity}--\ref{ass:beta-threshold}, $\eta$ has a continuous density $f_\eta$ with bounded tails, $\beta$ is piecewise $C^1$ with slope in $[0,1)$ and bounded image, and $R'_e$ is continuous and locally bounded. Hence all integrands below admit a uniform integrable envelope, so dominated convergence / Leibniz rule applies and we may interchange $\partial_e$ and $\E$.

\smallskip
\noindent\textbf{Step 2 (Derivative of the default indicator).}
Define $H(e,\eta)=\hat G-\beta(\hat G)$. On the set where $\beta$ is differentiable,
\[
\partial_e H(e,\eta)\;=\;(\partial_e \hat G)\,\bigl(1-\beta'(\hat G)\bigr)\;=\;-R'_e(e,\theta)\,\bigl(1-\beta'(\hat G)\bigr).
\]
Approximate the Heaviside $\mathbf 1\{u>0\}$ by smooth $s_n(u)$ with $s_n'\to \delta_0$ in the sense of distributions, and apply dominated convergence:
\[
\frac{\partial}{\partial e}\E\!\big[\mathbf 1\{H(e,\eta)>0\}\big]
=\lim_{n\to\infty}\E\!\big[s_n'(H)\,\partial_e H\big]
= \E\!\left[\delta_0\!\big(H\big)\,\partial_e H\right].
\]
Since $H=\hat G-\beta(\hat G)$ is a monotone $C^1$ transformation of $\hat G$ with slope $1-\beta'(\hat G)\in(0,1]$ a.e., the density of $H$ at $0$ equals $f_\eta\!\big(\hat G-\beta(\hat G)\big)$ a.e. Hence
\[
\frac{\partial}{\partial e}\E\!\big[\mathbf 1\{p(\hat G,\hat\theta)<G\}\big]
=\frac{\partial}{\partial e}\E\!\big[\mathbf 1\{H>0\}\big]
= -\,R'_e(e,\theta)\,\E\!\Big[f_\eta\!\big(\hat G-\beta(\hat G)\big)\,\bigl(1-\beta'(\hat G)\bigr)\Big],
\]
where we have used that on the cap–slack set $\{\beta(\hat G)<b(\hat\theta)\}$ the event $\{p<G\}$ coincides with $\{H>0\}$, while on the cap-binding set the boundary contribution is controlled by Lemma~\ref{lem:boundary-bound}.

\smallskip
\noindent\textbf{Step 3 (Derivative of the realized payout).}
Write $p(\hat G,\hat\theta)=\mathbf 1\{\hat G>0\}\min\{\beta(\hat G),b(\hat\theta),\hat G\}$. Since $\beta(0)=0$ and $\beta'\in[0,1)$, we have $\beta(\hat G)\le \hat G$ for all $\hat G\ge 0$; thus on $\{\hat G>0\}$ and cap–slack $\{\beta(\hat G)<b(\hat\theta)\}$, $p=\beta(\hat G)$ and
\[
\partial_e p \;=\; \beta'(\hat G)\,\partial_e \hat G \;=\; -\,\beta'(\hat G)\,R'_e(e,\theta).
\]
On the cap-binding set $p=b(\hat\theta)$ so $\partial_e p=0$; hence
\[
\frac{\partial}{\partial e}\E[p(\hat G,\hat\theta)]
= -\,R'_e(e,\theta)\,\E\!\Big[\beta'(\hat G)\,\mathbf 1\{\beta(\hat G)<b(\hat\theta)\}\Big] \;+\; \Delta_{\text{bdry}},
\]
where $|\Delta_{\text{bdry}}|\le C\cdot \PP_\theta[\beta(\hat G)\ge b(\hat\theta)]$ by Lemma~\ref{lem:boundary-bound}.

\smallskip
\noindent\textbf{Step 4 (FOC).}
Collecting terms,
\[
\Phi'(e)=R'_e(e,\theta)-\phi-\varphi\,\frac{\partial}{\partial e}\E[\mathbf 1\{p<G\}] +\omega_b\,\frac{\partial}{\partial e}\E[p].
\]
Using the expressions above and canceling the common factor $R'_e(e,\theta)$, the boundary contribution is controlled by Lemma~\ref{lem:boundary-bound} and the first-order condition $\Phi'(e^\ast)=0$ becomes
\[
R'_e\!\bigl(e^{\ast}(\theta),\theta\bigr)\Big\{1+\varphi\,\E\!\big[f_\eta(\hat G-\beta(\hat G))\bigl(1-\beta'(\hat G)\bigr)\big]-\omega_b\,\E\!\big[\beta'(\hat G)\,\mathbf 1\{\beta(\hat G)<b(\hat\theta)\}\big]\Big\}=\phi,
\]
which is Eq.~\eqref{eq:FOC_e_correct}. \end{proof}

\begin{proof}[Proof of Lemma~\ref{lem:margprob}]
Let $\delta(e)=\PP_\theta[p(\hat G,\hat\theta)<G]$. On the cap–slack event $\{\beta(\hat G)<b(\hat\theta)\}$ we have $\{p<G\}=\{\hat G-\beta(\hat G)>0\}= \{H>0\}$. Repeating the mollifier argument in the proof of Eq.~\eqref{eq:FOC_e_correct},
\[
\delta'(e)=\frac{\partial}{\partial e}\E[\mathbf 1\{H>0\}]=\E\!\left[\delta_0(H)\,\partial_e H\right]
= -\,R'_e(e,\theta)\,\E\!\left[f_\eta\!\big(\hat G-\beta(\hat G)\big)\,\bigl(1-\beta'(\hat G)\bigr)\right].
\]
For \emph{threshold} $\beta$ we have $\beta'(\hat G)=0$ a.e., hence
\[
\delta'(e)=-\,R'_e(e,\theta)\,\E\!\left[f_\eta\!\big(\hat G-\beta(\hat G)\big)\right].
\]
On the cap-binding set, the event $\{p<G\}$ becomes $\{b(\hat\theta)<G\}$ and contributes a boundary term controlled by Lemma~\ref{lem:boundary-bound}, uniformly bounded by a constant times the cap-binding tail probability; this does not alter the formula. \qedhere
\end{proof}

\begin{proof}[Proof of Proposition~\ref{prop:reduction} (implementation equivalence via a cap index)]
Fix $(\tau,s,\bar D,\beta)$ and a true type $\theta$. Minimizing over $(q,I,D)$ yields a reduced cost $C_0(\theta)$ independent of the report. By Assumption~\ref{ass:report-invariance}, the interior $e^\ast(\theta)$ is report-independent, and any boundary contribution is controlled by Lemma~\ref{lem:boundary-bound}. Hence interim utility can be written as
\[
U_L(\hat\theta,\theta)=\lambda_T(\theta)\,T(\hat\theta)+\omega_b\,\tilde b(\hat\theta;\theta)+K(\theta),
\]
with $\lambda_T(\theta)=\omega_T-\omega_b\,\partial_T\E[p(\hat G,\hat\theta)\mid\theta]$ and $\tilde b(\hat\theta;\theta)=\E[\min\{\beta(\hat G),b(\hat\theta)\}\mid\theta]$. By Lemma~\ref{lem:lambdaT}, under threshold $\beta$ and continuous noise we have $\partial_T\E[p\,|\,\theta]=0$, so $\lambda_T(\theta)=\omega_T$ a.e., delivering the quasi-linear reduced form.

Single crossing (Assumption~\ref{ass:sc}) then implies the allocation index
\[
x(\hat\theta;\theta)=\lambda_T(\theta)T(\hat\theta)+\omega_b(\theta)\tilde b(\hat\theta;\theta)
\]
is nondecreasing in $\hat\theta$, with ironing if needed. Conversely, given any nondecreasing cap schedule $b(\cdot)$, revelation/taxation principles with a public signal imply outcome-equivalent implementation by the realized payout
\[
p(\hat G,\hat\theta)=\mathbf 1\{\hat G>0\}\min\{\beta(\hat G),b(\hat\theta),\hat G\}
\]
and an appropriate $T(\cdot)$, completing the equivalence.
\end{proof}

\begin{proof}[Proof of Lemma~\ref{lem:mono-fixed}]
Let $x(\hat\theta;\theta)=\lambda_T(\theta)T(\hat\theta)+\omega_b(\theta)\tilde b(\hat\theta;\theta)$ and $U_L(\hat\theta,\theta)=x(\hat\theta;\theta)+K(\theta)$. By Assumption~\ref{ass:sc}, $\partial^2 U_L/\partial\theta\,\partial x\ge 0$ (Spence–Mirrlees). Suppose, towards a contradiction, that there exist $\theta_2>\theta_1$ with $x(\theta_2)<x(\theta_1)$. Then IC implies
\[
U_L(\theta_2,\theta_2)\ge U_L(\theta_1,\theta_2),\qquad
U_L(\theta_1,\theta_1)\ge U_L(\theta_2,\theta_1).
\]
Subtracting and using single crossing yields $x(\theta_2)\ge x(\theta_1)$, a contradiction. Hence $x(\theta)$ is weakly increasing. When ironing is needed (IFR with nonmonotone virtual term), the ironed allocation preserves nondecreasingness. \qedhere
\end{proof}

\begin{proof}[Proof of Lemma~\ref{lem:mono_cap}]
Fix $\theta$ and consider an incremental increase of $b(\theta)$ by $\dd b$ while holding $b$ elsewhere fixed. By Lemma~\ref{lem:capcalc}, the marginal increase in the Province’s expected cost at type $\theta$ equals
\[
(\alpha+\kappa b(\theta))\,\PP_\theta[\beta(\hat G)\ge b(\theta)]\,\dd b.
\]
By Myerson’s envelope for direct mechanisms, the marginal (virtual) benefit from relaxing the cap at $\theta$ equals
\[
\frac{\gamma}{\lambda_T(\theta)}\,\omega_b(\theta)\,\haz(\theta)\,\PP_\theta[\beta(\hat G)\ge b(\theta)]\,\dd b,
\]
where $\haz(\theta)=f(\theta)/\bar F(\theta)$ under IFR. Equating marginal cost and benefit cancels the common tail probability and yields the pointwise KKT condition
\[
\alpha+\kappa b(\theta)=\frac{\gamma\,\omega_b(\theta)}{\lambda_T(\theta)}\,\haz(\theta).
\]
If $\lambda_T,\omega_b$ are locally constant, this implies $b(\theta)=\frac{1}{\kappa}\Big(\frac{\gamma\omega_b}{\lambda_T}\haz(\theta)-\alpha\Big)$, which is increasing in $\theta$ because $h$ is increasing under IFR. Projection onto $[0,\bar b]$ preserves weak monotonicity. If $\lambda_T(\theta)$ varies with $\theta$, a sufficient condition for the RHS to be weakly increasing is that $\lambda_T$ be weakly decreasing while $h$ is weakly increasing; otherwise standard ironing of the virtual term $\frac{\gamma\omega_b}{\lambda_T(\theta)}\haz(\theta)$ restores a nondecreasing $b^\ast(\cdot)$. \qedhere
\end{proof}

\begin{proof}[Proof of Proposition~\ref{prop:opt_cap}]
Work with the reduced form, threshold $\beta$, and IFR so that ironing yields a nondecreasing allocation. The Province minimizes
\[
\E_\theta\Big[\E\big[\alpha\,\min\{\beta(\hat G),b(\theta)\}+\tfrac{\kappa}{2}\min\{\beta(\hat G),b(\theta)\}^{2}\mid\theta\big]+\gamma\,T(\theta)\Big]
\]
subject to IC/IR/LL and monotonicity. Using Lemma~\ref{lem:capcalc}, the pointwise marginal cost (at type $\theta$) of increasing $b(\theta)$ equals $(\alpha+\kappa b(\theta))\,\PP_\theta[\beta\ge b(\theta)]$. By Myerson’s lemma, the virtual marginal benefit equals $(\gamma/\lambda_T)\,\omega_b\,\haz(\theta)\,\PP_\theta[\beta\ge b(\theta)]$ with $\haz(\theta)=f/\bar F$. Equating and canceling the common tail probability gives the interior solution
\[
b(\theta)=\frac{1}{\kappa}\left(\frac{\gamma\,\omega_b}{\lambda_T}\,\haz(\theta)-\alpha\right).
\]
Projection onto $[0,\bar b]$ yields \eqref{eq:b-star}, and IFR implies monotonicity (Lemma~\ref{lem:mono_cap}). To recover $T^\ast$, note that under $\lambda_T\equiv\omega_T$ the implemented index $x(\theta)=\omega_T\,T(\theta)+\omega_b\,\tilde b(\theta;\theta)$ must be nondecreasing; holding $x$ feasible implies \eqref{eq:T-star} a.e., with $T^\ast(\theta^{\min})=0$ by IR normalization. \qedhere
\end{proof}

\begin{proof}[Proof of Proposition~\ref{prop:nobailout}]
At $b\equiv 0$, the marginal expected cost of relaxing the cap at $\theta$ is $\alpha$ (Lemma~\ref{lem:capcalc}), while the virtual marginal benefit equals $(\gamma/\lambda_T)\,\omega_b\,\haz(\theta)=(\gamma\omega_b/\omega_T)\,\haz(\theta)$ under the threshold-$\beta$ baseline. If $\alpha\ge (\gamma\omega_b/\omega_T)\,\haz(\theta)$ for all $\theta$, then the KKT condition is nonnegative everywhere and $b=0$ is pointwise optimal. Otherwise at any $\theta^\sharp\in\arg\max \haz(\theta)$ with strict inequality, increasing $b(\theta^\sharp)$ strictly reduces the objective, so $b^\ast\equiv 0$ cannot be optimal. \qedhere
\end{proof}

\begin{proof}[Proof of Proposition~\ref{prop:welfare}]
Total expected welfare equals the sum of municipal interim utilities minus provincial costs:
\[
W(T,b)=\E_\theta[V(\theta)]-\E_\theta\Big[\gamma T(\theta)+\E\{\alpha p+\tfrac{\kappa}{2}p^2\mid\theta\}\Big],
\quad p=\min\{\beta(\hat G),b(\theta)\}.
\]
Under IC with $V(\underline\theta)=\underline U$, the envelope formula (Remark~\ref{rem:lambdaT-const}) gives
\[
V'(\theta)=\lambda_T'(\theta)T(\theta)+\omega_b\,\partial_\theta \tilde b(\theta;\theta)+K'(\theta).
\]
Integrating and substituting into $W$ yields a virtual-surplus functional in which the $\theta$–wise marginal effect of $b(\theta)$ is exactly the difference between the virtual benefit $(\gamma/\lambda_T)\omega_b \,\haz(\theta)$ and the marginal expected cost $\alpha+\kappa b(\theta)$ (times the common tail probability). Hence maximizing $W$ subject to IC/IR/LL and monotonicity is equivalent to the pointwise KKT condition used in Proposition~\ref{prop:opt_cap}; the resulting allocation is therefore second–best efficient among all IC–IR–LL mechanisms. \qedhere
\end{proof}
\clearpage   
\end{appendices}
\bibliographystyle{apalike}
\bibliography{references}

\end{document}